\documentclass{article}
\usepackage[utf8]{inputenc}
\usepackage{amsmath,amssymb,bbm,amsthm,appendix,mathtools}
\usepackage[normalem]{ulem}
\usepackage[numbers,sort]{natbib}
\usepackage{graphicx}
\usepackage[scr=rsfs]{mathalpha}
\usepackage{cancel}
\usepackage[margin=1.2in]{geometry}
\usepackage[dvipsnames]{xcolor}
\usepackage{footmisc}
\usepackage{wrapfig}
\usepackage{comment}

\usepackage{color}

\newcommand\red[1]{{\color{red}#1}}
\newcommand{\Q}{\mathcal{Q}}
\newcommand{\loc}{\mathrm{loc}}
\newcommand{\Sp}{\mathbb{S}}
\newcommand{\id}{\mathrm{id}}

\newcommand{\Mw}{M_{\mathrm{wns}}}

\newcommand{\gw}{g_{\mathrm{wns}}}

\newcommand{\rd}{\partial}

\newcommand{\vol}{\text{vol}}
\newcommand{\R}{\mathbb{R}}

\usepackage{todonotes}

\usepackage{marginnote}

\newtheorem{definition}{Definition}[section]
\newtheorem{example}[definition]{Example}
\newtheorem{prop}[definition]{Proposition}
\newtheorem{remark}[definition]{Remark}
\newtheorem{lemma}[definition]{Lemma}
\newtheorem{thm}[definition]{Theorem}

\newtheorem{cor}[definition]{Corollary}
\usepackage{graphicx} 
\mathtoolsset{showonlyrefs,showmanualtags}

\title{Spherically symmetric inextendibility of weak null singularities with Christoffel symbols in $L^s_{\loc}$}

\author{\bigskip Peter Cameron\footnote{Department of Mathematics, Imperial College London, South Kensington Campus, London SW7 2AZ, United Kingdom and the Heilbronn Institute for
Mathematical Research, Bristol, UK. Email: p.cameron24@imperial.ac.uk} \hspace{1pt} \& Jan Sbierski\footnote{School of Mathematics and Maxwell Institute for Mathematical Science, University of Edinburgh, James Clerk Maxwell Building, Peter Guthrie Tait Road, Edinburgh EH9 3FD, United Kingdom. Email: jan.sbierski@ed.ac.uk}
}
\date{\today}

\begin{document}

\maketitle
\begin{abstract}
Motivated by the strong cosmic censorship conjecture in general relativity, we prove the inextendibility of spherically symmetric weak null singularities as spherically symmetric Lorentzian manifolds with a continuous metric and Christoffel symbols in $L^{s}_{\loc}$ for $s >1$. The result assumes a suitable blow-up condition on the derivative of the area-radius function transverse to the weak null singularity in combination with a rigidity result on continuous spherically symmetric extensions across null boundaries proven in \cite{CameronSbierski}. In particular we show that these assumptions are satisfied by the Reissner-Nordstr\"{o}m-Vaidya spacetime as well as by a class of spacetimes arising from small and generic spherically symmetric perturbations of subextremal Reissner-Nordstr\"{o}m under the Einstein-Maxwell-scalar field system.

\end{abstract}

\begingroup
\setlength{\parskip}{0pt}
\tableofcontents
\endgroup

\section{Introduction}\label{Introduction}
The modern formulation of the strong cosmic censorship conjecture, due to Christodoulou \cite{Chris09}, states that the maximal globally hyperbolic development of generic asymptotically flat or compact initial data for the (vacuum) Einstein equations is inextendible as a Lorentzian manifold with a continuous metric and locally square-integrable Christoffel symbols. An important case of interest for this conjecture arises at so-called weak null singularities which form in the interior of generic asymptotically flat vacuum black holes. It was shown in the seminal work of Dafermos-Luk \cite{DafLuk17, DafLuk26} that, for black holes which settle down to Kerr at a suitable rate along the event horizon, there exists a non-empty Cauchy horizon in the black hole interior across which the spacetime metric extends continuously. Furthermore, it was shown independently by Gurriaran \cite{Gurriaran.nonlinear} and Luk-S.\ \cite{LukSbie26} that, under suitable lower bounds on the decay of the perturbation to Kerr, the Cauchy horizon becomes weakly singular. Moreover, in \cite{SbieLip} the second author proved the Lipschitz-inextendibility of such weak null singularities. However, until now no geometric inextendibility results at the level of locally square-integrable Christoffel symbols have been available.
In this paper we prove such an inextendibility result (in fact we prove inextendibility with a continuous metric and Christoffel symbols in $L^s_{\loc}$ for $s>1$), although we restrict to spherically symmetric weak null singularities and  extensions which respect this spherical symmetry. 

The first example we apply our result to is the Reissner-Nordstr\"{o}m-Vaidya spacetime. This spacetime models a spherically symmetric influx of null dust into a subextremal Reissner-Nordstr\"{o}m black hole.\footnote{We assume an inverse polynomial decay of the accretion rate.} It was the earliest exact solution used to understand singularity formation at the Cauchy horizon in dynamically charged or rotating black holes, see \cite{Hiscock1981}. In \cite{Sbie22a} it was shown that this spacetime is $C^{0,1}_{\loc}$-inextendible across the weak null singularity. Here, we strengthen this to $C^0 \cap W^{1,s}_{\loc}$-inextendibility for extensions which respect the spherical symmetry. Note that the Hawking mass remains uniformly bounded at the weak null singularity of the Reissner-Nordstr\"om-Vaidya spacetime.

The second class of spacetimes we apply our result to arise as the maximal Cauchy development of generic perturbations of exact subextremal Reissner-Nordstr\"om initial data under the spherically symmetric Einstein-Maxwell-scalar field system (for the sake of brevity we call this class of spacetimes the `DLO spacetimes'). The study of these spacetimes goes back to Dafermos, who proved in \cite{Daf03,Daf05a} that, under the assumption of a pointwise upper bound on the decay of the scalar field along the event horizon, a non-empty Cauchy horizon forms in the black hole interior across which the spacetime is $C^0$-extendible. The assumed decay bound is consistent with Price's law \cite{Price72} and was later established by Dafermos and Rodnianski in \cite{DafRod05}. In the same work \cite{Daf05a} Dafermos showed that if additionally a pointwise lower bound on the decay of the scalar field holds along the event horizon, then the Cauchy horizon becomes weakly singular in the sense that the Hawking mass blows up there. This directly translates to the $C^{0,1}_{\loc}$-inextendibility of these weak null singularities in the spherically symmetric category of extensions (cf.\ Lemma \ref{lemma:SSSLip}). The pointwise lower bound assumed in this result was only proven recently by Gautam in \cite{Gautam}. 

An analysis of DLO spacetimes based on integrated lower bounds was carried out by Luk and Oh in \cite{LukOh19I,LukOh19II} (see also \cite{LukOhShla23}), who gave a complete proof that the maximal future Cauchy development arising from generic initial data is $C^2$-inextendible (not restricted to spherically symmetric extensions). This unrestricted $C^2$-inextendibility was strengthened in  \cite{Sbie22a} by the second author to unrestricted $C^{0,1}_{\loc}$-inextendibility.

Returning to the work of Dafermos \cite{Daf05a}, he explicitly constructed a \emph{particular} set of coordinates in which the metric extends continuously and in a spherically symmetric manner to the weak null singularity and in which the Christoffel symbols extend in $L^1_{\loc}$ but not in $L^{s}_{\loc}$ for any $s>1$. The result obtained in this paper strengthens  the spherically symmetric $C^{0,1}_{\loc}$-inextendibility from \cite{Daf05a} to spherically symmetric $C^0\cap W^{1,s}_{\loc}$-inextendibility for $s>1$ and thus shows that the regularity of the extension obtained in \cite{Daf05a} is optimal among extensions which respect spherical symmetry.

\subsection{The main theorem}

The manifolds we consider in this paper are given by $(\Mw,\gw)$, where $\Mw=\Q\times\Sp^2$; $\Q=(-1,1)\times(-1,0)$ with coordinates $(u, v)$, and $(\theta^1, \theta^2) = \theta^A$, $A=1,2$ are local smooth coordinates on $\Sp^2$. The metric in double null form is
\begin{equation} \label{EqGDN}
\begin{split}
    \gw&:=\underbrace{-\Omega^2(du \otimes dv + dv \otimes du)}_{=:g_\Q} + r^2\gamma_{AB}d\theta^A\otimes d\theta^B\;,
\end{split}
\end{equation}
where $\Omega(u,v)$ and $r(u,v)$ are smooth, strictly positive functions on $\Mw$ depending only on $(u,v)$ and $\gamma$ is the usual round metric on $\Sp^2$. The time orientation is fixed by stipulating that $\rd_u$ is future-directed null. The Lorentzian manifold $(\Mw, \gw)$ should be thought of as a \emph{local piece} of the 
\begin{wrapfigure}[17]{r}{0.4\textwidth}
 \centering
  \def\svgwidth{4cm}
   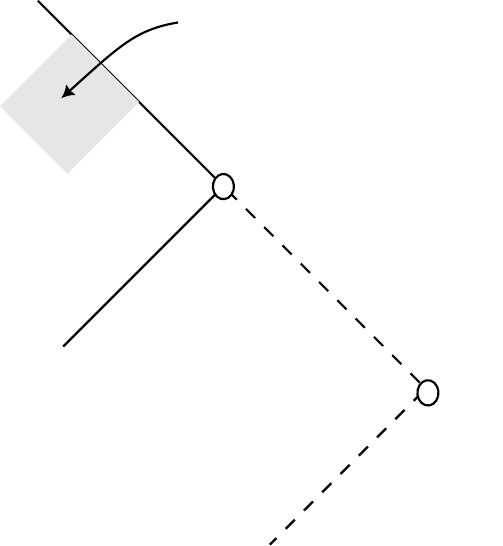 
   \caption{Penrose-style diagram of a spherically symmetric black hole with a weak null singularity.}
   \label{FigPen}
\end{wrapfigure}
spacetime near a weak null singularity in the interior of a spherically symmetric black hole (arising from the Einstein equations coupled to suitable matter fields, see also Section \ref{Application to  weak null singularities in the interior of spherically symmetric black holes} and Figure \ref{FigPen}). The weak null singularity occurs at $v=0$ and is signaled by $\rd_v r$ becoming singular as $v \nearrow 0$. 
It is called ``weak'' because $\gw$ extends continuously to $\{v=0\}$ with respect to the $(u,v,\theta^A)$ differentiable structure. This means that $\Omega$ and $r$ extend continuously to $\overline{\Mw}=\overline{Q}\times\Sp^2:=(-1,1)\times(-1,0]\times\Sp^2$ as strictly positive functions. \emph{From now on we will assume this holds for the class of Lorentzian manifolds considered in this paper.} The singularity is referred to as `null' because surfaces of constant $v$ are null hypersurfaces in $(\Mw,\gw)$ (this name is further justified by \cite[Proposition 3.2 (b)]{CameronSbierski}). 
As explained in detail in \cite[Section 2.1.1]{SbieLip} the inextendibility `across' the global weak null singularity in Figure \ref{FigPen} follows if each such local piece  $(\Mw, \gw)$ is inextendible `across' $v=0$ -- a notion which we recall now.

Given a regularity class $\Gamma$ (e.g. $\Gamma=C^0,C^{0,1}_{\loc},C^0\cap W^{1,s}_{\loc}$) we say that a $\Gamma$-extension of a Lorentzian manifold $(M,g)$ is an isometric embedding $\iota:M\hookrightarrow\tilde{M}$ of $(M,g)$ into a Lorentzian manifold $(\tilde{M},\tilde{g})$ of the same dimension as $M$, where $\tilde{g}$ is $\Gamma$-regular and $\partial\iota(M)\subset\tilde{M}$ is non-empty.
If a $\Gamma$-extension exists we say that $(M,g)$ is \textbf{$\Gamma$-extendible}, otherwise we say $(M,g)$ is \textbf{$\Gamma$-inextendible}. 
For the spacetime $(\Mw,\gw)$, a \textbf{$\Gamma$-extension of $(\Mw,\gw)$ across $\{v=0\}$} is a $\Gamma$-extension $\iota:\Mw\hookrightarrow\tilde{M}$ of $(\Mw,\gw)$ such that there exists a $C^1$ future-directed causal curve $\tau : [-1,0) \to \Mw$ with $\lim_{s \to 0} \tau_{v}(s) = 0$, $\lim_{s \to 0} \tau_u(s) <1$ and such that $\lim_{s \to 0}(\iota \circ \tau)(s) \in \rd \iota(\Mw) \subset \tilde{M}$ exists. 

In this paper we restrict attention to $\Gamma$-extensions of $(\Mw,\gw)$ (across $\{v=0\}$) of the form $\iota=\iota_{\Q}\times\text{id}_{\Sp^2}:\Mw\rightarrow\tilde{M}=\tilde{\Q}\times\Sp^2$, where $\iota_{\Q}:\Q\hookrightarrow\tilde{\Q}$ is a $\Gamma$-extension of $(\Q,g_{\Q})$ (across $\{v=0\}$) into $(\tilde{\Q}, \tilde{g}_{\tilde{\Q}})$, the metric $\tilde{g}$ on $\tilde{M}$ is of the form $\tilde{g} = \tilde{g}_{\tilde{\Q}} + \tilde{r}^2 \gamma$ with $\tilde{r}$ being a positive $\Gamma$-regular function on $\tilde{\Q}$ which extends $r \circ \iota_{\Q}^{-1}$, and  $\text{id}_{\Sp^2}$ denotes the identity map on the $\Sp^2$ factor. We refer to such extensions as \textbf{strongly spherically symmetric}.\footnote{The reason for calling such extensions `strongly' spherically symmetric is that even if  the original spacetime admits alternative foliations by orbits of the $SO(3)$-action (like for example Minkowski spacetime), the definition fixes the orbits of the $SO(3)$-action on the extension.} This restriction allows us to apply many of the results of \cite{CameronSbierski} to $\iota_\Q$.

As in \cite{CameronSbierski} (and in contrast to \cite{sbierski2022uniqueness}) we will refer to the fixed ``reference'' $C^0$-extension $\overline{\Q}$ for the purposes of discussing the uniqueness of $C^0$-extensions of $(\Q,g_\Q)$.\footnote{To be precise, $(\overline{\Q}, g_\Q)$ is not a $C^0$-extension of $(\Q, g_\Q)$ in the sense defined earlier, since $(\overline{\Q}, g_\Q)$ is a manifold \emph{with} boundary, whereas an extension is defined to be an isometric embedding into a manifold \emph{without} boundary. Hence it is more accurate to call $(\overline{\Q}, g_\Q)$ a \emph{continuous boundary extension of $(\Q, g_\Q)$.} However, it is straightforward to construct a proper reference $C^0$-extension $\iota_1 : \Q\hookrightarrow \tilde{\Q}_1$ from this.} In particular, suppose $\iota_\Q:Q\to\tilde{\Q}$ is a $C^0$-extension of $(\Q,g_\Q)$ such that $\tilde{p}_{u_*}:=\lim_{v \to 0}\iota_{\Q}(u_*,v) \in \partial\iota_\Q(\Q)$ exists for some $u_*\in(-1,1)$.\footnote{Remark \ref{RemEquivDefAcross} shows that this assumption is, in $1+1$ dimensions, equivalent to the assumption of $\iota_{\Q} : \Q \to \tilde{\Q}$ being a $C^0$-extension \emph{across $\{v=0\}$}. In this way we do not have to keep track of the arbitrary causal curve $\tau$.} If there exists $\epsilon>0$ such that $\iota_{\Q}\vert_{(u_* - \epsilon, u_* + \epsilon) \times (-\epsilon,0) } : (u_* - \epsilon, u_* + \epsilon) \times (-\epsilon,0)\to \tilde{\Q}$ extends  to $$\overline{\iota}_{\Q} : (u_* - \epsilon, u_* + \epsilon) \times (-\epsilon,0] \to \tilde{\Q}$$
as a $C^k$-diffeomorphism/homeomorphism onto its image then we say that $\iota_\Q : \Q \hookrightarrow \tilde{\Q}$ is \textbf{locally $C^k$/($C^0$)-equivalent to $\overline{\Q}$ at $(u_*, 0) \in \overline{\Q}$}.
If no such $\epsilon>0$ exists, we say that $\iota_\Q : \Q \hookrightarrow \tilde{\Q}$ is \textbf{locally $C^k$/($C^0$)-inequivalent to $\overline{\Q}$ at $(u_*, 0) \in \overline{\Q}$}. We can now state the main theorem of this paper.

\begin{thm}\label{thm:Christoffelblowup}
Let $(\Mw,\gw)$ be as above with $\Omega$ and $r$ extending as continuous positive functions to $\overline{\Q}$ and let $s>1$. Let $u_*\in(-1,1)$ and suppose that for all $\epsilon>0$ sufficiently small 
\begin{equation}\label{eqn:blowupassumption}
    \begin{split}
       \int_{-\frac{1}{2}}^v \inf_{u\in[u_*-\epsilon,u_*+\epsilon]}  \left|\partial_vr(u,v')\right|^{s} dv'\to\infty\text{ as }v\to0\;.
    \end{split}
\end{equation} 
Then there is no strongly spherically symmetric $C^0\cap W^{1,s}_{\loc}$-extension, $\iota:=\iota_\Q\times\id_{\Sp^2} :\Mw\rightarrow\tilde{M}=\tilde{\Q}\times\Sp^2 $, of $(\Mw,\gw)$ 
such that $\lim_{v\to0}\iota_\Q(u_*,v) \in \tilde{\Q}$ exists and $\iota_\Q$ is locally $C^0$-equivalent to $\overline{\Q}$ at $(u_*,0)$.
\end{thm}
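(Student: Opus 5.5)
Argue by contradiction: assume such an extension $\iota=\iota_\Q\times\id_{\Sp^2}$ exists, with $\tilde g=\tilde g_{\tilde\Q}+\tilde r^2\gamma$ continuous, $\tilde r\in C^0\cap W^{1,s}_{\loc}(\tilde\Q)$, $\tilde p_{u_*}=\lim_{v\to0}\iota_\Q(u_*,v)\in\tilde\Q$, and $\iota_\Q$ locally $C^0$-equivalent to $\overline{\Q}$ at $(u_*,0)$.

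\textbf{Step 1: regularity of $\tilde r$.} The Christoffel symbols of $\tilde g$ in local coordinates $(x^0,x^1,\theta^A)$ on $\tilde M$ include $\tilde\Gamma^{\mu}_{AB}=-\tfrac12\tilde g^{\mu\nu}\rd_\nu(\tilde r^2)\gamma_{AB}$ with $\mu,\nu$ ranging over the $\tilde\Q$ coordinates. Since $\tilde g^{\mu\nu}$ is continuous and non-degenerate, $\tilde\Gamma\in L^s_{\loc}$ gives $\rd_\nu(\tilde r^2)\in L^s_{\loc}$. As $\tilde r>0$ is continuous, $\tilde r\in W^{1,s}_{\loc}$ near $\tilde p_{u_*}$. (Under the $W^{1,s}$ formulation of the extension this may already hold by definition.)

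\textbf{Step 2: relating coordinates.} By local $C^0$-equivalence, $\overline\iota_\Q$ is a homeomorphism from $U_\epsilon:=(u_*-\epsilon,u_*+\epsilon)\times(-\epsilon,0]$ onto a neighbourhood of $\tilde p_{u_*}$ intersected with $\overline{\iota_\Q(\Q)}$. Choose a chart $(x^0,x^1)$ near $\tilde p_{u_*}$. The key is to control how $\rd_v$ compares to coordinate derivatives. For this I would invoke the rigidity results of \cite{CameronSbierski} for continuous extensions across null boundaries. The null curves $v\mapsto\iota_\Q(u,v)$ are integral curves of a null line field in $\tilde\Q$. In $1+1$ dimensions with a continuous metric, the two null directions are continuous line fields, say spanned by continuous vector fields $\tilde L,\tilde N$. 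The affine-type parametrisation of the image curves is fixed by $g_\Q(\rd_u,\rd_v)=-\Omega^2$, where $\Omega$ is continuous and positive up to $v=0$. This suggests writing $\iota_{\Q*}\rd_v=a\,\tilde L$ with $a$ bounded above and below on $U_\epsilon\cap\Q$. I expect the paper's rigidity result to show that the image of $\{v=0\}$ is a null curve and that $v$ extends continuously with this control. This is the main obstacle: a merely $C^0$ identification gives no derivative bound a priori. I would try to derive it from $\tilde g(\iota_*\rd_u,\iota_*\rd_v)=-\Omega^2$ together with the continuity of the null frame, perhaps after choosing $\tilde L$ normalised against a continuous transversal. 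Degenerate rescalings $a\to0$ or $a\to\infty$ would have to be ruled out by comparing $v$ with a continuous function along the extended null curves.

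\textbf{Step 3: integrating.} With $\iota_{\Q*}\rd_v=a\tilde L$ and $a\in[c,C]$, we get $|\rd_vr|\le C|\tilde L\tilde r|\lesssim|\rd\tilde r|$ on $\iota_\Q(U_\epsilon)$. Change variables via the map $(u,v)\mapsto\iota_\Q(u,v)$, whose Jacobian $\sim\Omega^2/\sqrt{|\det\tilde g|}$ is bounded above and below under the same control. This gives
\[
\int_{u_*-\epsilon}^{u_*+\epsilon}\int_{-\epsilon}^{0}|\rd_vr|^s\,dv\,du\lesssim\int_{K}|\rd\tilde r|^s\,dx<\infty
\]
for a compact $K\ni\tilde p_{u_*}$. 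The left side is at least
\[
2\epsilon\int_{-\epsilon}^0\inf_{u\in[u_*-\epsilon,u_*+\epsilon]}|\rd_vr(u,v')|^s\,dv'=\infty
\]
by \eqref{eqn:blowupassumption}, since the part from $-\tfrac12$ to $-\epsilon$ is finite by continuity. This is a contradiction. Should the Jacobian control from Step 2 fail, a fallback is Fubini along the transverse $u$-lines: for almost every $u$, $\tilde r$ is absolutely continuous along the image null curve with derivative in $L^s$. This needs only the one-dimensional parametrisation comparison along each curve, and uniformity in $u$ for a positive-measure set of curves.
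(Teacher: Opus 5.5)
\textbf{The gap is in Steps 2--3.} To get $|\rd_v r|\le C|\tilde L\tilde r|\lesssim|\rd\tilde r|$ you need a pointwise \emph{upper} bound on $a$, where $\iota_{\Q*}\rd_v=a\tilde L$. Nothing in the hypotheses provides one.

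Continuity of $\tilde g$ only controls a product. Writing also $\iota_{\Q*}\rd_u=b\tilde N$, one has $-\Omega^2=ab\,\tilde g(\tilde N,\tilde L)$, so $ab\sim\Omega^2$. Equivalently, $|\det D\iota_\Q|=\Omega^2/\sqrt{-\det\tilde g}$, which means your Jacobian bound in Step 3 is automatic and needs no extra control. But $a$ and $b$ separately may degenerate as $v\to0$: in the corner example of Section~\ref{The Corner Extension}, $\rd_v q=(1-\beta)|v|^{-\beta}\to\infty$.

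Local $C^0$-equivalence, and the rigidity Theorem~\ref{thm:boundaryrigidity}, are purely topological. They only tell you that the curves $u\mapsto\iota_\Q(u,v)$, $u\in[u_*-\epsilon,u_*+\epsilon]$, do not shrink, i.e.\ $\int b\,du\ge C'>0$ uniformly in $v$. That does not exclude $b\to0$, $a\to\infty$ along individual $u$-lines. The paper never establishes pointwise control of this kind; Lemma~\ref{lemma:derivativecontrol} gives only relative bounds.

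Your fallback has the same defect. $W^{1,s}$ functions are absolutely continuous along coordinate lines, not along the curved image null curves. Even granting absolute continuity there, with $d\ell=a\,dv$ one has $\int|\rd_v r|^s\,dv=\int a^{s-1}|\tilde L\tilde r|^s\,d\ell$, which is uncontrolled when $a$ is unbounded.

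\textbf{The missing idea is to run the comparison in the opposite direction,} where averaged information suffices. Exactly, $(\tilde L\tilde r)\circ\iota_\Q=a^{-1}\rd_v r$, and $a^{-1}\gtrsim b$. By H\"older, $\int_{u_*-\epsilon}^{u_*+\epsilon}a^{-s}\,du\ge c\,(2\epsilon)^{1-s}\big(\int b\,du\big)^s\ge c'>0$ for all $v$ near $0$. Changing variables,
$\int_{\iota_\Q([u_*-\epsilon,u_*+\epsilon]\times[v_*,v])}|\tilde L\tilde r|^s\gtrsim\int_{v_*}^v\inf_u|\rd_v r|^s\,dv'\cdot\inf_{v'}\int a^{-s}\,du\to\infty$.
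This contradicts $|\tilde L\tilde r|\lesssim|\rd\tilde r|\in L^s$, and it is exactly why the blow-up hypothesis carries the infimum over $u$.

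The paper's proof does this with the coordinate field $\rd_q$ instead of a null field. Its first step is the same estimate, applied to $\rd_v r\,\rd_q v$. Because $\rd_q$ is only approximately null, a cross term $\rd_u r\,\rd_q u$ appears. The paper handles it with the $\alpha$-estimates of Lemma~\ref{lemma:derivativecontrol} and the cancellation Lemma~\ref{lemma:integralbalance}, ending with $\rd_p\tilde r\notin L^s$. Using a continuous null field $\tilde L$ tangent to the curves of constant $u$, as you set up, removes the cross term altogether — but only once the inequality points the right way.
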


 Note that Theorem \ref{thm:Christoffelblowup} only rules out $C^0 \cap W^{1,s}_{\loc}$-extensions where $\iota_\Q$ is locally $C^0$-equivalent to $\overline{\Q}$ at $(u_*,0)$. 
In Section \ref{The Corner Extension}, we show that this restriction is necessary in general: we give an example of an $(\Mw, \gw)$ satisfying the assumptions of Theorem \ref{thm:Christoffelblowup} for which there exists a strongly spherically symmetric $C^0 \cap W^{1,s}_{\loc}$-extension across $\{v=0\}$ such that $\iota_\Q$ is \emph{not} $C^0$-equivalent to $\overline{\Q}$ at $(u_*,0)$. Hence, when applying Theorem \ref{thm:Christoffelblowup} to the examples of the Reissner-Nordstr\"om-Vaidya spacetime in Section \ref{RNV} (see Corollary \ref{cor:RNV})  and the DLO spacetimes in Section \ref{DLO} (see Corollary \ref{cor:DLO}) to conclude their strongly spherically symmetric $C^0 \cap W^{1,s}_{\loc}$-inextendibility across the weak null singularity, we need to rule out the existence of $C^0$-extensions\footnote{Technically, ruling out the existence of such $C^0 \cap W^{1,s}_{\loc}$-extensions would suffice, but we will not use the additional information.} across the weak null singularity at $\{v=0\}$ which are locally $C^0$-\emph{inequivalent} to $\overline{\Q}$. In the case of the Reissner-Nordstr\"om-Vaidya spacetime, the obstruction to the existence of such $C^0$-extensions is entirely \emph{global}. For each \emph{local} piece $(\Mw, \gw)$ a so-called `corner' $C^0$-extension across $\{v=0\}$ can be constructed in which all outgoing null geodesics approach the same limit point. This construction crucially uses that the area radius function extends as a constant to the weak null singularity. Moreover, it is shown in \cite[Theorem 4.3]{CameronSbierski} that in this case the corner extension must be \textit{global}, meaning that \textit{all} outgoing null geodesics in the black hole interior approach the same limit point. However, as shown in \cite[Proposition 4.4]{CameronSbierski} (see also \cite[Corollary 4.6]{CameronSbierski}) such $C^0$-inequivalent extensions of the \emph{global} spacetime are ruled out by the infinite spacetime volume between the event horizon and the singular Cauchy horizon. The same obstruction is also present in the case of the DLO spacetimes. However, here even a \emph{local} obstruction is present which we appeal to in Section \ref{DLO}: the area radius function is generically strictly monotonically decreasing along the weak null singularity which even rules out the corner extensions of the local spacetime pieces $(\Mw, \gw)$.

Our result can also be applied to the spherically symmetric weak null singularities arising in the original mass-inflation spacetimes of Poisson-Israel \cite{Poisson:1989zz, Poisson:1990eh} (see also \cite{Ori:1991zz}) and the perturbations of subextremal Reissner-Nordstr\"om under the spherically symmetric Einstein-Maxwell-charged scalar field system studied by van de Moortel \cite{Vdm18, VdM21, VdM23}.

\subsection{Discussion of main theorem}

There has been very little previous work on strongly spherically symmetric extensions. An example of an inextendibility result with this restriction can be found in \cite{GalLin16}. This applies to a certain class of hyperbolic FLRW spacetimes. Another result for this class of extensions derives from Theorem \ref{thm:boundaryrigidity} (originally \cite[Theorem 4.3]{CameronSbierski}) which classifies the $C^0$-structure of $C^0$-extensions of $(\Q,g_\Q)$ across $\{v=0\}$ and forms the basis for ruling out the global `corner' extensions mentioned above. 

Clearly, the assumption of spherical symmetry on the extension significantly simplifies any inextendibility proof. For example, in the lemma below we show how the strongly spherically symmetric $C^{0,1}_{\loc}$-inextendibility of $(\Mw,\gw)$ can easily be inferred from mass inflation, where we recall the definition $m_{\mathrm{Hawking}} := \frac{r}{2}\big(1 - g_{\mathrm{wns}}^{-1}(dr,dr)\big)$ of the Hawking mass in spherical symmetry.

\begin{lemma}\label{lemma:SSSLip}
Let $(\Mw,\gw)$ be as in Section \ref{Introduction} and $u_*\in(-1,1)$. Suppose there exists $c>0$ such that
\begin{equation}\label{eqn:blowupassumptionLip}
    \begin{split}
      \partial_ur(u_*,v)&<-c<0\\
       \text{and }\quad        \partial_vr(u_*,v)&\to-\infty \quad \text{ as }v\to0.
    \end{split}
\end{equation}
Then there is no strongly spherically symmetric $C^{0,1}_{\loc}$-extension $\iota=\iota_{\Q}\times {\id}_{\Sp^2}:\Mw\to\tilde{M}$ of $(\Mw,\gw)$ such that $\lim_{v\to0}\iota_{\Q}(u_*,v)$ exists.
\end{lemma}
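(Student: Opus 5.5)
We argue by contradiction via the Hawking mass, which is a scalar invariant of the spherically symmetric structure. In double null coordinates we have $g_{\mathrm{wns}}^{-1}(dr,dr) = -\frac{4}{\Omega^2}\partial_u r\,\partial_v r$, so
\begin{equation}
m_{\mathrm{Hawking}} = \frac{r}{2}\Big(1 + \frac{4\,\partial_u r\,\partial_v r}{\Omega^2}\Big).
\end{equation}
Along $u=u_*$ the factor $r$ and $\Omega$ stay bounded above and below by positive constants as $v\to0$, since they extend continuously and positively to $\overline{\Q}$. By \eqref{eqn:blowupassumptionLip} the product $\partial_u r\,\partial_v r$ tends to $+\infty$: it is the product of a factor below $-c$ and a factor tending to $-\infty$. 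Hence $m_{\mathrm{Hawking}}(u_*,v)\to+\infty$ as $v\to 0$, or equivalently $g_{\mathrm{wns}}^{-1}(dr,dr)(u_*,v)\to-\infty$.

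Now suppose $\iota=\iota_\Q\times\id_{\Sp^2}$ is a strongly spherically symmetric $C^{0,1}_{\loc}$-extension into $\tilde M=\tilde\Q\times\Sp^2$, and that $\tilde p:=\lim_{v\to0}\iota_\Q(u_*,v)\in\tilde\Q$ exists. By definition $\tilde r$ is a positive locally Lipschitz function on $\tilde\Q$ extending $r\circ\iota_\Q^{-1}$, and $\tilde g_{\tilde\Q}$ is a locally Lipschitz, in particular continuous, Lorentzian metric. Choose a relatively compact chart neighbourhood $U$ of $\tilde p$ in $\tilde\Q$ with coordinates $x^0,x^1$. On $U$ the metric components $\tilde g_{\tilde\Q}^{ab}$ are continuous on $\overline U$, hence bounded. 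Since $\tilde r$ is Lipschitz on $U$, Rademacher's theorem makes $\partial_a\tilde r$ exist almost everywhere with $|\partial_a\tilde r|\le L$. On $\iota_\Q(\Q)\cap U$, which is open, $\tilde r$ is smooth (it equals $r\circ\iota_\Q^{-1}$ and $\iota_\Q$ is a smooth isometry), so these derivatives hold pointwise there with the same bound $L$. Hence
\begin{equation}
|\tilde g^{-1}_{\tilde\Q}(d\tilde r,d\tilde r)|\le C \quad\text{on } \iota_\Q(\Q)\cap U.
\end{equation}

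Because $\iota_\Q$ is an isometry and $\tilde r\circ\iota_\Q=r$, we get $\tilde g^{-1}_{\tilde\Q}(d\tilde r,d\tilde r)\circ\iota_\Q=g_\Q^{-1}(dr,dr)=g_{\mathrm{wns}}^{-1}(dr,dr)$, the last equality holding because $r$ depends only on $(u,v)$. For $v$ close to $0$ the points $\iota_\Q(u_*,v)$ lie in $U$, so $g_{\mathrm{wns}}^{-1}(dr,dr)(u_*,v)$ is bounded by $C$. This contradicts the blow-up to $-\infty$ established above.

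The only mildly delicate step is justifying the uniform bound on $d\tilde r$ near $\tilde p$ using only Lipschitz regularity. It works because the local Lipschitz constant bounds the classical derivative wherever $\tilde r$ is differentiable, and $\tilde r$ is differentiable on the open set $\iota_\Q(\Q)$. We also never need the extension to be "across $\{v=0\}$" in any stronger sense than the existence of the limit point $\tilde p$.
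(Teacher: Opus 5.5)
Your proof is correct and follows the same route as the paper. You use that the scalar $g^{-1}(dr,dr)$ is preserved by the isometric embedding and blows up along $u=u_*$, yet would have to stay bounded near $\tilde p$ for a $C^{0,1}_{\loc}$ metric and $\tilde r$; you also spell out the Lipschitz/Rademacher boundedness step the paper leaves implicit. One harmless slip: with the paper's normalisation $g_\Q=-\Omega^2(du\otimes dv+dv\otimes du)$ one has $\gw^{-1}(dr,dr)=-2\Omega^{-2}\partial_u r\,\partial_v r$, not $-4\Omega^{-2}\partial_u r\,\partial_v r$, which does not affect the argument.
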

\begin{proof} 
    Suppose $\iota=\iota_{\Q}\times \text{id}_{\Sp^2}:\Mw\hookrightarrow\tilde{M}$ is a strongly spherically symmetric $C^0$-extension of $(\Mw,\gw)$ into some Lorentzian manifold $(\tilde{M},\tilde{g})$ such that $\lim_{v\to0}\iota_{\Q}(u_*,v)$ exists. For any $\theta_*^A\in\Sp^2$, we have
    \begin{equation}
        \begin{split}
           \left|\left.\tilde{g}^{-1}(d\tilde{r},d\tilde{r})\right\vert_{\iota(u_*,v,\theta^A_*)}\right|
           &=\left|\left.\gw^{-1}(dr,dr)\right\vert_{(u_*,v,\theta^A_*)}\right|\\
            &=2\Omega(u_*,v)^{-2}\left|\partial_ur(u_*,v)\partial_vr(u_*,v)\right|\\
            &\to\infty\text{ as }v\to0.
        \end{split}
    \end{equation}
It follows that $\iota$ is not a $C^{0,1}_{\loc}$-extension.
\end{proof}

Abstractly we have used here that in the spherically symmetric category, $g^{-1}(dr,dr)$ is a scalar function at the level of first derivatives of the metric. Hence, its blow-up is a very simple pointwise obstruction to strongly spherically symmetric $C^{0,1}_{\loc}$-extendibility in very much the same way as the blow-up of the Kretschmann scalar is a simple pointwise obstruction to (unrestricted) $C^{1,1}_{\loc}$-extendibility. Moving from here to strongly spherically symmetric $C^0 \cap W^{1,s}_{\loc}$-inextendibility poses two challenges: the first is that obstructions to $C^0 \cap W^{1,s}_{\loc}$-extendibility are manifestly local (integral) rather than pointwise: divergence must be detected on small neighbourhoods rather than 
along a sequence of points. Thus, the local topological ($C^0$) structure of such extensions has to be understood (this has been achieved in \cite{CameronSbierski}). The second challenge is that the $C^1$-differentiable structure of the extension has to be controlled: the scalar function $g^{-1}(dr,dr)$ cannot be used to obtain a \emph{sharp} inextendibility result of weak null singularities since it is only $\rd_v r$ which blows up, while $\rd_u r$ stays regular (and may even vanish). Hence, any use of $g^{-1}(dr,dr)$ would be intrinsically wasteful. This leaves us working with the singular Christoffel symbol (or derivative of a metric quantity) directly. Here, the assumption of spherical symmetry provides the simplification that $r$ is a scalar on the extension and hence $dr$, or equivalently the Christoffel symbols
\begin{equation}
    \Gamma^A_{Bu}=\frac{\partial_ur}{r} \delta^A_B\quad\text{and}\quad\Gamma^A_{Bv}=\frac{\partial_vr}{r} \delta^A_B\,,
\end{equation}
transform like tensor quantities -- in contrast to their transformation rule involving the $C^2$-differentiable structure in the absence of spherical symmetry. 

The control of the $C^1$-differentiable structure needed for the proof of Theorem \ref{thm:Christoffelblowup} (which is by contradiction) is obtained in two parts. The first is contained in Lemma \ref{lemma:derivativecontrol}, which relies on the fact that although derivatives of $\iota_\Q(u,v)$ may blow-up as $v\to0$, the continuity of the metric allows us to bound certain derivatives in terms of others. The second part is obtained in \eqref{EqContrC1} through the assumption that the $C^0\cap W^{1,s}_{\loc}$-extension is locally $C^0$-equivalent to the reference extension. 

Finally, we point out that in the proof \cite{SbieLip} of Lipschitz-inextendibility of weak null singularities without any symmetry assumptions it was also a crucial step to a priori control the $C^1$-differentiable structure of any assumed $C^{0,1}_{\loc}$-extension. This was achieved in \cite{sbierski2022uniqueness}, where it was shown that any such $C^{0,1}_{\loc}$-extension is $C^1$-equivalent to the reference extension. On the other hand, in \cite{CameronSbierski} it was shown explicitly that such strong control on the $C^1$-differentiable structure  need not hold for $C^0 \cap W^{1,s}_{\loc}$-extensions. Here, we establish weaker control bounds on the $C^1$-differentiable structure which are, however, still strong enough to yield a $C^0 \cap W^{1,s}_{\loc}$-inextendibility statement.

\subsection*{Acknowledgements}

The first author was supported by the Additional Funding Programme for
Mathematical Sciences, delivered by EPSRC (EP/V521917/1) and the
Heilbronn Institute for Mathematical Research. The second author was supported by the Royal Society University Research Fellowship URF\textbackslash R1\textbackslash 211216. 

\section{Background results}

In this section we collect some background results on $C^0$-extensions which will be required throughout the paper. 

\begin{lemma}[\cite{SbierskiSchw} Lemma 2.4]\label{lemma:nearmink}
 Let $(M,g)$ be a $(1+1)$-dimensional Lorentzian manifold with continuous metric $g$ and let $p\in M$. Then for every $\delta>0$ we can find $\epsilon_0,\epsilon_1>0$, an open neighbourhood $U$ of $p$, and a coordinate chart $\varphi:U\rightarrow(-\epsilon_0,\epsilon_0)\times(-\epsilon_1,\epsilon_1)$ such that
    \begin{enumerate}
        \item $\varphi(p)=(0,0)$
        \item $g_{\mu\nu}(p)=m_{\mu\nu}$
        \item $|g_{\mu\nu}(p')-m_{\mu\nu}|<\delta$\text{ for all } $p'\in U$
    \end{enumerate}
where $m_{\mu\nu}=\begin{pmatrix}
    -1 & 0\\0 &1
\end{pmatrix}$ is the Minkowski metric on $\R^{1,1}$. It will be convenient to assume that $\delta>0$ is sufficiently small that
\begin{equation}
   \frac{1}{2}\leq\sqrt{-\det g}\leq 2\label{eqn:detbound}
\end{equation}
in $U$.
\end{lemma}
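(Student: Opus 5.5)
The plan is to build the chart in three steps: start from an arbitrary smooth chart, normalise the metric at $p$ by a constant linear change of coordinates, and then use continuity of $g$ to shrink the domain.

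First, I would fix any smooth coordinate chart $\psi : V \to \R^2$ around $p$ and translate so that $\psi(p) = (0,0)$. In these coordinates $g_{\mu\nu}(p)$ is a symmetric $2\times 2$ matrix of Lorentzian signature $(-,+)$. By Sylvester's law of inertia (or Gram–Schmidt with respect to $g(p)$) there is an invertible matrix $A \in GL(2,\R)$ with $A^T g(p) A = m$. I then set $\varphi' := A^{-1}\circ \psi$. This is still a smooth chart with $\varphi'(p) = (0,0)$. Since $A$ is constant, the metric components in the new chart are $A^T (g_{\mu\nu}\circ \psi^{-1}) A$, which are still continuous and equal $m_{\mu\nu}$ at the origin. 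This gives properties 1 and 2.

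Next, I would use continuity. Given $\delta>0$, the components $g_{\mu\nu}$ in the chart $\varphi'$ are continuous, so there is an open ball $B$ around the origin on which $|g_{\mu\nu} - m_{\mu\nu}| < \delta$ for all $\mu,\nu$. I would then choose $\epsilon_0, \epsilon_1 > 0$ with $(-\epsilon_0,\epsilon_0)\times(-\epsilon_1,\epsilon_1) \subset B \cap \varphi'(V)$. Setting $U := \varphi'^{-1}\big((-\epsilon_0,\epsilon_0)\times(-\epsilon_1,\epsilon_1)\big)$ and $\varphi := \varphi'|_U$ gives property 3.

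For the determinant bound \eqref{eqn:detbound}, note that $\det g$ is a polynomial in the components, and $\det m = -1$. So there is $\delta_0>0$ such that $|g_{\mu\nu}-m_{\mu\nu}|<\delta_0$ for all $\mu,\nu$ forces $-\det g \in [\tfrac14, 4]$. For example, $|\det g + 1| \le 2\delta + 3\delta^2$ suffices to control this. Hence, after replacing $\delta$ by $\min(\delta,\delta_0)$ in the construction above, \eqref{eqn:detbound} holds on $U$.

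There is no real obstacle here. The only points needing care are that the normalising transformation is a constant linear map, so regularity of $g$ is preserved, and that the domain is shrunk to a coordinate rectangle rather than an arbitrary open set. This is why $\epsilon_0,\epsilon_1$ are chosen only after the ball $B$ has been found.
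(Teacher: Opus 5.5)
Your proposal is correct and follows essentially the same route as the proof in the cited source, which this paper relies on when it remarks that the near Minkowski coordinates $(\tilde{x}_0,\tilde{x}_1)$ are linearly related to the original chart. That route is to normalise $g(p)$ to $m$ by a constant linear change of coordinates, then shrink to a coordinate rectangle using continuity of $g$, with the determinant bound following once $\delta$ is small; your estimate $|\det g+1|\le 2\delta+3\delta^2$ is valid, if slightly generous.
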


We refer to $(U,\varphi)$ as a \textbf{near Minkowski neighbourhood} centred at $p$ and to $\varphi$ as \textbf{near Minkowski coordinates}.

We will require the following basic structural result, proved in \cite{CameronSbierski}, for $C^0$-extensions in 1+1-dimensions.

\begin{prop}[\cite{CameronSbierski} Proposition 3.2 (a)(i)]\label{prop:boundarystructure}
Suppose $\iota_\Q:\Q\hookrightarrow\tilde{\Q}$ is a $C^0$-extension of $(Q,g_\Q)$ to $(\tilde{\Q},\tilde{g}_\Q)$. Suppose $\tau : [-1,0) \to \Q$ is a $C^1$ future-directed causal curve such that  $\lim_{s \to 0} \tau_u(s)=: u_* <1$, $\lim_{s \to 0} \tau_v(s) = 0$ and $\tilde{p}_{u_*}:=\lim_{s \to 0}(\iota_\Q \circ \tau)(s) \in \rd \iota_\Q(\Q) \subset \tilde{\Q}$ exists. 

Then there exists a near Minkowski neighbourhood, $(\tilde{U},\tilde{\varphi}=(\tilde{x}_0,\tilde{x}_1))$ centred at $\tilde{p}_{u_*}$ and $v_*<0<\epsilon$ such that $\iota_{\Q}:[u_*-\epsilon,u_*+\epsilon]\times[v_*,0)\hookrightarrow \tilde{U}$ extends continuously to $\iota_{\Q}:[u_*-\epsilon,u_*+\epsilon]\times[v_*,0]\hookrightarrow\tilde{U}$ and \begin{equation}\label{eqn:monotonicityconventions}
    \frac{\partial\tilde{x}_1}{\partial u}<0<\frac{\partial\tilde{x}_0}{\partial u},\frac{\partial\tilde{x}_0}{\partial v},\frac{\partial\tilde{x}_1}{\partial v}
\end{equation}
holds on the connected component of $\iota_{\Q}(\Q) \cap \tilde{U}$ that contains $\iota_{\Q}\big([u_* - \epsilon, u_* + \epsilon] \times [v_*,0)\big)$.\footnote{Here, and in the following, by slight abuse of notation we have denoted the restriction of $\iota_{\Q}$ to $[u_*-\epsilon,u_*+\epsilon]\times[v_*,0)$ as well its continuous extension to $[u_*-\epsilon,u_*+\epsilon]\times[v_*,0]$  by the same symbol.}

\end{prop}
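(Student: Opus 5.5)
The plan is to build the neighbourhood $\tilde U$ first and read the sign conditions off the null structure. Then I would trap a rectangle $[u_*-\epsilon,u_*+\epsilon]\times[v_*,0)$ inside a small causal diamond around $\tilde p_{u_*}$. Finally, the continuous extension to $\{v=0\}$ would follow from monotonicity and boundedness of the near Minkowski coordinates.

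\emph{Step 1: neighbourhood and signs.} Apply Lemma \ref{lemma:nearmink} at $\tilde p_{u_*}$ with $\delta$ so small that, at every point of $\tilde U$, the two null directions of $\tilde g_{\tilde\Q}$ have slopes $d\tilde x_1/d\tilde x_0$ within $\eta\ll1$ of $+1$ and $-1$ respectively. Shrinking $\tilde U$ if necessary, I would arrange that $\rd_{\tilde x_0}$ is future directed. I would also fix the time orientation of $\tilde\Q$ so that $\iota_\Q$ is time-orientation preserving; the time orientation of $(\Q,g_\Q)$ is the one stipulated in the paper.

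On any connected component $W$ of $\iota_\Q(\Q)\cap\tilde U$, the vectors $d\iota_\Q(\rd_u)$ and $d\iota_\Q(\rd_v)$ are nonvanishing, future-directed null and linearly independent. Each of them therefore lies in exactly one of the two continuous future null half-line fields. By connectedness of $W$ and linear independence, they lie in different fields throughout $W$. After possibly replacing $\tilde x_1$ by $-\tilde x_1$, $\rd_u$ is mapped to the left-moving field and $\rd_v$ to the right-moving one. This gives \eqref{eqn:monotonicityconventions} on $W$, since future-directedness forces the $\tilde x_0$-components to be positive.

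\emph{Step 2: trapping a rectangle.} Choose a small closed diamond $D\subset\tilde U$ containing $\tilde p_{u_*}$ in its interior. Its boundary should consist of four curves of slope $\pm(1+2\eta)^{\pm1}$, chosen so that null curves in $\tilde U$ cannot cross them in the wrong direction. Since $\iota_\Q(\tau(s))\to\tilde p_{u_*}$ and $\tau_u(s)\to u_*$, for $s$ close to $0$ the point $q=\iota_\Q(\tau(s))=\iota_\Q(u_1,v_*)$ lies deep inside $D$.

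By Step 1, the $u$-line through $q$ moves with $\tilde x_1$ decreasing and $\tilde x_0$ increasing, at slope close to $-1$. Hence for $|u-u_1|\le 2\epsilon$ with $\epsilon$ small, the points $\iota_\Q(u,v_*)$ stay in a smaller diamond $D'\subset D$. Shifting to $u_*$ costs nothing, since $|u_1-u_*|$ is small.

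Next consider the $v$-ray $v\mapsto\iota_\Q(u,v)$ for $v\in[v_*,0)$ starting in $D'$. Along it, $\tilde x_0$ and $\tilde x_1$ increase with slope close to $+1$. Such a ray can leave $D$ only through its future right-moving boundary. I would exclude this by comparing with the ray through $u_1$, which is the curve $\tau$ up to reparametrisation of the $u$-component. That ray converges to $\tilde p_{u_*}$, an interior point of $D$. Because $u$-lines connect the rays and are left-moving, the rays for $u$ on either side are sandwiched in $\tilde x_1-\tilde x_0$ up to errors $O(\eta\epsilon)$. This places the whole rectangle $\iota_\Q([u_*-\epsilon,u_*+\epsilon]\times[v_*,0))$ in $D$. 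It also shows the rectangle lies in a single component $W$, so Step 1 applies on it.

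\emph{Step 3: continuous extension.} On each $v$-ray, $\tilde x_0$ and $\tilde x_1$ are monotone increasing and bounded, so $\lim_{v\to0}\iota_\Q(u,v)$ exists in $D\subset\tilde U$. For joint continuity, note that both $\tilde x_0+\tilde x_1$ and $\tilde x_0-\tilde x_1$ are monotone in $v$. Combined with the uniform Lipschitz-type control in $u$ along $u$-lines (null, slope $\approx-1$, with $\tilde x_0$ bounded), a Dini-type argument gives that the convergence as $v\to0$ is uniform in $u$. Hence the extension to $[v_*,0]$ is continuous.

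I expect the main obstacle to be Step 2, namely ruling out that nearby $v$-rays escape $D$ before $v\to0$. The approach would be to construct the comparison diamond carefully, with its null-ish boundary slopes controlled by $\eta$, and to use the single convergent ray supplied by $\tau$ as an anchor. I would not try to prove injectivity of the boundary map, since the corner extensions show it can fail.
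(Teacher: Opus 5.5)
The paper only cites this proposition from the companion paper and gives no proof, so I am judging your argument on its own terms. Step~1 is fine. Just fix the sign of $\tilde x_0$, like that of $\tilde x_1$, on the component $W$, rather than time-orienting all of $\tilde\Q$.

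\textbf{The gap is in Step~2, which carries the whole content of the proposition.} Three things go wrong.

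First, the anchor is wrong. The ray $\{u_1\}\times[v_*,0)$ with $u_1=\tau_u(s)$ is not $\tau$ up to reparametrisation unless $\tau_u$ is eventually constant, and in general it does \emph{not} converge to $\tilde p_{u_*}$. For example, take a trivial extension of $\overline{\Q}$ across $\{v=0\}$ and $\tau(s)=(u_*+\kappa s,s)$ with small $\kappa>0$: that ray converges to $(u_1,0)\neq(u_*,0)$. More generally, in case (b) of Theorem~\ref{thm:boundaryrigidity} the extended map is injective, so this always happens when $u_1\neq u_*$. The only curve known to converge to $\tilde p_{u_*}$ is $\iota_\Q\circ\tau$. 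Even convergence of the $u_*$-ray is part of what must be proved (cf.\ Remark~\ref{RemEquivDefAcross}).

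Second, ``shifting to $u_*$ costs nothing'' is circular. Your $\epsilon$ comes from continuity of $u\mapsto\iota_\Q(u,v_*)$ and depends on $s$, but you need $\epsilon\ge u_*-\tau_u(s)$. What is actually required is that the image of $[\tau_u(s),u_*]\times\{\tau_v(s)\}$ is short in the $\tilde x$-coordinates. Its small $u$-length says nothing about this, because no bound on $\rd_u\tilde x_i$ is available as $v\to0$.

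Third, the same confusion sits in the ``errors $O(\eta\epsilon)$''. The slope bounds control the drift of $\tilde x_0\mp\tilde x_1$ by $\eta$ times the \emph{coordinate} length of the connecting null curve, not its $u$-length. They also apply only while that curve stays in $\tilde U$, and a purely local slope argument cannot stop a connecting null curve in $\iota_\Q(\Q)$ from leaving $\tilde U$ to the future and re-entering from the past.

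\textbf{What is missing is a continuity (bootstrap) argument on \emph{compact} subsets of $\Q$, anchored on $\iota_\Q\circ\tau$ itself.}
\begin{itemize}
\item For $u<u_*$, let $T_u$ be the compact region bounded by $\tau|_{[s,0)}$, the segment $[\tau_u(s),u]\times\{\tau_v(s)\}$, and the $v$-ray at $u$ up to where $\tau$ crosses it.
\item Every point of $T_u$ is joined to $\tau$, inside $T_u$, both by a $u$-segment and by a $v$-ray.
\item Along images of $u$-lines, $\tilde x_0+\tilde x_1$ varies by at most $O(\eta)$ times the variation of $\tilde x_0-\tilde x_1$, and vice versa along $v$-rays. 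Hence $\iota_\Q(T_u)\subseteq D$ already forces $\iota_\Q(T_u)$ into a strictly smaller diamond.
\item So the set of admissible $u$ is open and closed in $[\tau_u(s),u_*)$. This controls the image of $[\tau_u(s),u_*]\times\{\tau_v(s)\}$.
\item Only then can $\epsilon$ be chosen by continuity at $(u_*,\tau_v(s))$, and the analogous bootstrap in $v$ traps the rectangle.
\end{itemize}

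\textbf{Step~3 also needs repair.} $\tilde x_0-\tilde x_1$ is not monotone along $v$-rays; it can drift by up to $\eta\,\Delta\tilde x_0$ in either direction. Dini's theorem presupposes a continuous limit. Monotonicity in each variable separately does not give one: $\psi(u/|v|)$, with $\psi$ a smoothed step vanishing on $(-\infty,0]$, has a discontinuous limit. Continuity of $u\mapsto\iota_\Q(u,0)$ has to come from the same approximate conservation laws. For instance, compare with a fixed level $v_1$ close to $0$, at which $u\mapsto\iota_\Q(u,v_1)$ is continuous.
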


\begin{remark} \label{RemEquivDefAcross}
    Note that the assumption of Proposition \ref{prop:boundarystructure} is exactly that $\iota_{\Q} : \Q \to \tilde{\Q}$ is a $C^0$-extension of $(\Q,g_{\Q})$ across $\{v=0\}$. The conclusion in particular implies that $\lim_{v \to 0} \iota_{\Q}(u_*,v) = \tilde{p}_{u_*}$, which shows that the $C^1$ future directed causal curve $\tau$ in the definition of `a $C^0$-extension of $(\Q,g_{\Q})$ across $\{v=0\}$' can, without loss of generality, be assumed to be a null curve of constant $u$.
\end{remark}

In Section \ref{Application to  weak null singularities in the interior of spherically symmetric black holes} we will use Theorem \ref{thm:Christoffelblowup} to prove inextendibility statements at the level of the Christoffel symbols across weak null singularities in the interiors of spherically symmetric black holes. Since Theorem \ref{thm:Christoffelblowup} only applies to extensions where $\iota_\Q$ is $C^0$-equivalent to $\overline{\Q}$, we first need to rule out the situation where this is not the case. To do this, we recall the following rigidity result. 
\begin{thm}[\cite{CameronSbierski} Theorem 4.3]\label{thm:boundaryrigidity}
  Let $\iota_\Q : \Q \hookrightarrow \tilde{\Q}$ be a $C^0$-extension of $(\Q,g_\Q)$ such that $\tilde{p}_{u_*}:=\lim_{v \to 0} \iota_\Q (u_*,v) \in \partial \iota_{\Q}(M)$ exists for some $u_*\in(-1,1)$. Then one of the following holds:
    \begin{enumerate}
        \item[(a)] $\iota_\Q$ extends continuously to $\iota_\Q : (-1,1) \times (-1,0] \to 
        \tilde{\Q}$ with $\iota_\Q(u,0) = \tilde{p}_{u_*}$ for all $u \in (-1,1)$. 
        \item[(b)] 
        There exist $v_*<0<\epsilon$
        such that $\iota_\Q$ extends to $\iota_\Q : [u_*-\epsilon, u_*+\epsilon] \times [v_*,0] \to \tilde{\Q}$ as a homeomorphism onto its image.
    \end{enumerate}
\end{thm}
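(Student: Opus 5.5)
The plan is to start from the local structure given by Proposition \ref{prop:boundarystructure} and then study the boundary map $u\mapsto \iota_\Q(u,0)$. By Remark \ref{RemEquivDefAcross}, the hypothesis that $\tilde p_{u_*}=\lim_{v\to0}\iota_\Q(u_*,v)$ exists is exactly the hypothesis of that proposition. It gives a near Minkowski neighbourhood $(\tilde U,\tilde\varphi)$ centred at $\tilde p_{u_*}$, and a continuous extension of $\iota_\Q$ to $[u_*-\epsilon,u_*+\epsilon]\times[v_*,0]$ satisfying the monotonicity \eqref{eqn:monotonicityconventions}. Passing to the limit $v\to0$ in that monotonicity, the function $u\mapsto\tilde x_0(u,0)$ is non-decreasing and $u\mapsto\tilde x_1(u,0)$ is non-increasing. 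Moreover, $\gamma_0(u):=\iota_\Q(u,0)$ is a uniform limit of the null curves $u\mapsto\iota_\Q(u,v)$, so it is a (possibly degenerate) Lipschitz null curve contained in $\partial\iota_\Q(\Q)$.

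The first step is a dichotomy for $\gamma_0$, carried out inside $\tilde U$. Along $\gamma_0$ both coordinates are monotone and the curve is null. Hence if $\gamma_0(u_1)=\gamma_0(u_2)$ with $u_1<u_2$, then $\gamma_0$ is constant on $[u_1,u_2]$. So the set $I=\{u:\gamma_0(u)=\tilde p_{u_*}\}$ is a closed interval containing $u_*$. If $I=\{u_*\}$ and $\gamma_0$ is nowhere locally constant near $u_*$, then $\gamma_0$ is injective on a small interval. Next I would show that the extended map is injective on $[u_*-\epsilon',u_*+\epsilon']\times[v_*,0]$: it is injective on the part $v<0$ because $\iota_\Q$ is an embedding, and points with $v=0$ map into $\partial\iota_\Q(\Q)$, which is disjoint from $\iota_\Q(\Q)$. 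A continuous injection from a compact set into a Hausdorff space is a homeomorphism onto its image, and this gives alternative (b).

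The remaining case, which I expect to be the main obstacle, is when $\gamma_0$ is constant on a one-sided interval $[u_*,u_*+\delta]$ (or on $[u_*-\delta,u_*]$). Here I must show that constancy propagates, first to a full neighbourhood of $u_*$ and then to all of $(-1,1)$. My approach is to exploit the continuity of the metric quantity $\tilde g(\partial_u\iota_\Q,\partial_v\iota_\Q)=-\Omega^2$. In near Minkowski coordinates, $\int_{a}^{b}\Omega^2(u,v)\,du$ pairs the displacement $\iota_\Q(b,v)-\iota_\Q(a,v)$ with the ingoing null direction $\partial_v\iota_\Q$, suitably normalised. The outgoing displacement tends to zero on an interval where $\gamma_0$ is constant. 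What I need to rule out is that the transverse curves $\iota_\Q(u,\cdot)$, all converging to $\tilde p_{u_*}$, behave differently on the two sides of an endpoint of $I$. I would try to get this from the monotonicity, since ingoing curves from both sides are squeezed between neighbouring ones, combined with the positivity and continuity of $\Omega$ up to $v=0$.

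Once the local dichotomy holds at every boundary point where a limit exists, the final step is a connectedness argument in $u\in(-1,1)$. Let $A$ be the set of $u$ at which $\lim_{v\to0}\iota_\Q(u,v)=\tilde p_{u_*}$ and $\gamma_0$ is locally constant. The local analysis shows that $A$ is open. Closedness follows because, at a limit point of $A$, the limit $\lim_{v\to0}\iota_\Q(u,v)$ exists and equals $\tilde p_{u_*}$ by the squeezing/monotonicity, so Proposition \ref{prop:boundarystructure} applies there too. Hence either $A=(-1,1)$, which is alternative (a), or $A=\emptyset$ and we are in case (b) at $u_*$.
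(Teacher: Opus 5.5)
The paper only quotes this result from \cite{CameronSbierski}, so I judge your argument on its own terms. Your set-up is sound: the reduction to Proposition \ref{prop:boundarystructure}, the monotonicity of $\gamma_0(u)=\iota_\Q(u,0)$, the fact that $\gamma_0(u_1)=\gamma_0(u_2)$ forces constancy on $[u_1,u_2]$, and the compact-to-Hausdorff argument giving (b) when $\gamma_0$ is injective.

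\textbf{The gap.} The genuine gap is the step you flag as the main obstacle: if $\gamma_0\equiv\tilde p_{u_*}$ on some $[a,b]$ with $a<b$, then $\gamma_0$ is constant on a two-sided neighbourhood of $[a,b]$. There you only state an intention, and the tools you name cannot do the job.
\begin{itemize}
\item Monotonicity and squeezing are purely topological/causal, and the configuration to be excluded is topologically consistent. The image of $[a,b]\times[v_*,0)$ would be a cusp, bounded by the ingoing null curves through $u=a$ and $u=b$ that merge at $\tilde p_{u_*}$. This cusp sits wedged between the images of the neighbouring strips, whose upper boundaries are non-degenerate null segments of $\gamma_0$.
\item Merging of ingoing null curves is compatible with a continuous $\tilde g$, as the corner extension of Section \ref{The Corner Extension} shows.
\item Your fixed-$v$ identity $\int_a^b\Omega^2\,du=-\int_a^b\tilde g(\partial_u\iota_\Q,\partial_v\iota_\Q)\,du$, combined with the outgoing displacement across $[a,b]$ tending to zero, only forces $\partial_v\iota_\Q$ to blow up on the collapsed strip. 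In the null coordinates $(p,q)$ of Lemma \ref{lemma:derivativecontrol}, it gives $\int_a^b\partial_vq\,du\to\infty$. That is exactly what happens in the legitimate corner extension.
\end{itemize}
Nothing in your sketch couples the collapsed strip to an adjacent non-collapsed one. Since $\iota_\Q$ is smooth only in the interior, continuity of $\partial_v\iota_\Q$ in $u$ is not uniform as $v\to0$.

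\textbf{What would close it.} A two-dimensional volume comparison is needed. One way to do it: work in near-Minkowski null coordinates $(p,q)$ centred at $\tilde p_{u_*}$, in which both null directions have slope at most a small $\sigma$.
\begin{itemize}
\item The image of $[a,b]\times[v,0)$ has $\tilde g$-area $\gtrsim(b-a)|v|$, by the lower bound on $\Omega$.
\item Every ingoing curve in this image ends at $\tilde p_{u_*}$, so it lies in $\{-H\le q\le 0,\ |p|\le\sigma|q|\}$, with $H\approx|q(\iota_\Q(a,v))|$. Hence $H^2\gtrsim(b-a)|v|/\sigma$.
\item Now suppose $\gamma_0(a-\eta)\neq\tilde p_{u_*}$. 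The image of $[a-\eta,a]\times[v,0)$ has area $\lesssim\eta|v|$.
\item Yet its upper boundary (a null piece of $\gamma_0$ ending at $\tilde p_{u_*}$) and its lower boundary (the outgoing curve at level $v$, at depth about $-H$ near $u=a$) both have slope at most $\sigma$. So it contains a region of area $\gtrsim\min(d_0H,H^2/\sigma)$, where $d_0>0$ is the $p$-length of $\gamma_0([a-\eta,a])$.
\item For $\eta\ll b-a$ and $v\to0$ this is a contradiction.
\end{itemize}

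\textbf{Secondary gaps.} Your globalisation step also needs more than squeezing. Proposition \ref{prop:boundarystructure}, applied at points of $A$, gives windows whose size may shrink towards an endpoint $u_\infty$ of $A$. So you must show separately that the ingoing curve at $u_\infty$ stays in a near-Minkowski neighbourhood of $\tilde p_{u_*}$ before concluding that its limit exists. Finally, to exclude constancy intervals near $u_*$ with values other than $\tilde p_{u_*}$, which your case (b) requires, you need the propagation result at those boundary points as well.
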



Given a strongly spherically symmetric $C^0$-extension of the Reissner-Nordstr\"{o}m-Vaidya spacetime considered in Section \ref{RNV}, in order to rule out the possibility that $\iota_\Q$ is a corner (by this we mean we are in case (a) in Theorem \ref{thm:boundaryrigidity}) we recall \cite[Proposition 4.4]{CameronSbierski}. 
We re-state this proposition so it applies to the 1+1-dimensional Lorentzian manifold $\Q_{RNV}=(u_{\mathcal{H}^+},u_T)\times(V_0,0)$ with coordinates $(u,V)$ and metric 
\begin{equation}
    g_{\Q_{RNV}}=-\Omega'(u,V)^2\left(du\otimes dV+dV\otimes du\right),
\end{equation}
where $\Omega'$ is a smooth, strictly positive function on $\Q_{RNV}$ which extends to $\overline{\Q_{RNV}}:=(u_{\mathcal{H}^+},u_T)\times(V_0,0]$ as a continuous, strictly positive function.
\begin{prop}[\cite{CameronSbierski} Proposition 4.4]\label{prop:cornervolumecondition}
   Let $\iota_{\Q_{RNV}}$ be a $C^0$-extension of $(\Q_{RNV},g_{\Q_{RNV}})$ to $(\tilde{\Q}_{RNV},\tilde{g}_{\tilde{\Q}_{RNV}})$ such that $\tilde{p}_{u_*}:=\lim_{V\rightarrow0}\iota_{\Q_{RNV}}(u_*,V)\in\tilde{\Q}_{RNV}$ exists for some $u_*\in(u_{\mathcal{H}^+},u_T)$. Suppose $\mathrm{vol}_{g_{\Q_{RNV}}}((u_{\mathcal{H}^+},u_T)\times(V_*,0)):=\int_{u_{\mathcal{H}^+}}^{u_T}\int_{V_*}^0\Omega'(u,V)^2dVdu=\infty$ for all $V_*\in(V_0,0)$. Then there exists $\epsilon>0$ such that $\iota_{\Q_{RNV}}$ extends to $\iota_{\Q_{RNV}}:[u_*-\epsilon,u_*+\epsilon]\times(V_0,0]\rightarrow\tilde{\Q}_{RNV}$ as a homeomorphism onto its image. In particular, $\iota_{\Q_{RNV}}$ is locally $C^0$-equivalent to $\overline{\Q}_{RNV}$ at $(u_*,0) \in \overline{\Q}_{RNV}$. 
\end{prop}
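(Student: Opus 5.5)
The plan is to reduce to the dichotomy of Theorem \ref{thm:boundaryrigidity} and rule out the corner alternative using the infinite volume. Apply Theorem \ref{thm:boundaryrigidity} to $\iota_{\Q_{RNV}}$, with $(u,V)$ in place of $(u,v)$; the proof there only uses the double null form and the continuity of $\Omega'$ up to $\{V=0\}$. Either (b) holds, or (a) holds: $\iota_{\Q_{RNV}}(u,V)\to\tilde p_{u_*}$ as $V\to 0$ for \emph{every} $u\in(u_{\mathcal H^+},u_T)$.

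If (b) holds, I would upgrade the local homeomorphism on $[u_*-\epsilon,u_*+\epsilon]\times[V_*,0]$ to one on $[u_*-\epsilon,u_*+\epsilon]\times(V_0,0]$. On $\Q_{RNV}$ itself $\iota_{\Q_{RNV}}$ is already a homeomorphism onto an open set. The extended map is injective, because boundary points $(u,0)$ have images in $\partial\iota_{\Q_{RNV}}(\Q_{RNV})$ and are distinct by (b). It is then a continuous injection from a manifold with boundary whose restriction to each closed piece is a homeomorphism, and gluing gives a homeomorphism onto its image. This is exactly local $C^0$-equivalence at $(u_*,0)$. So the whole content is to show that (a) contradicts the infinite volume assumption.

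Assume (a), and write $p:=\tilde p_{u_*}$. Take a near Minkowski neighbourhood $(\tilde U,\tilde\varphi)$ centred at $p$ (Lemma \ref{lemma:nearmink}) with $\tilde\varphi(\tilde U)$ a bounded box. By \eqref{eqn:detbound}, every measurable subset of $\tilde U$ has $\tilde g$-volume at most $2\cdot 4\epsilon_0\epsilon_1<\infty$. Since $\iota_{\Q_{RNV}}$ is an isometric embedding, it suffices to find $V_*\in(V_0,0)$ with
\begin{equation}
\iota_{\Q_{RNV}}\big((u_{\mathcal H^+},u_T)\times(V_*,0)\big)\subset\tilde U ,
\end{equation}
because then $\mathrm{vol}_{g_{\Q_{RNV}}}((u_{\mathcal H^+},u_T)\times(V_*,0))<\infty$, which is the desired contradiction.

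To find such a $V_*$, I would use the causal structure in $\tilde U$. Shrink $\tilde U$ so that the two past null branches of $p$ are Lipschitz graphs with slopes close to $\pm1$. Fix $V_*$ so close to $0$ that $q:=\iota_{\Q_{RNV}}(u_*,V_*)$ lies deep inside $\tilde U$. Let $D\Subset\tilde U$ be the region bounded by the past null cone of $p$ and the two null curves through $q$; in near Minkowski coordinates this is a thin null ``wedge strip'' along both past branches of $p$, reaching to $\partial\tilde U$ only at small distance. This strip is relatively compact in a slightly larger near Minkowski chart, so it still has finite volume.

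The key claim is that every point $\iota_{\Q_{RNV}}(u,V)$ with $V>V_*$ lies in $D$. Since $\partial_V$ and $\partial_u$ are future directed null, and the monotonicity \eqref{eqn:monotonicityconventions} holds as long as we stay in $\tilde U$, three facts are available:
\begin{itemize}
\item[(i)] each curve $V\mapsto\iota_{\Q_{RNV}}(u,V)$ ends at $p$, so it lies in $J^-(p)$;
\item[(ii)] the curve $u\mapsto\iota_{\Q_{RNV}}(u,V_*)$ is one null generator through $q$;
\item[(iii)] each constant-$V$ curve is null with $\tilde x_1$ monotone in $u$.
\end{itemize}
I would run a continuity/first-exit argument in $u$ starting from $u_*$. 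Suppose some point of the strip $\{V>V_*\}$ left $D$. Then the constant-$u$ null curve through it would have to cross the past cone of $p$ transversally or leave through the boundary null curves. The first is impossible in $1+1$ dimensions for a future null curve ending at $p$ and lying in $J^-(p)$. The second would contradict injectivity of $\iota_{\Q_{RNV}}$ together with the ordering given by (ii) and (iii).

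I expect this containment step to be the main obstacle. The difficulty is that the $u$-interval is the full open interval $(u_{\mathcal H^+},u_T)$, so uniformity of the convergence $\iota_{\Q_{RNV}}(u,V)\to p$ is not given and must come entirely from the null-cone geometry near $p$. If the exit argument proves delicate, my fallback is to bound volumes directly. In double null coordinates adapted near $p$, the image region is a subset of $J^-(p)\cap J^+(\text{past null segment through }q)$ intersected with the past cone. Its volume is controlled by $\sqrt{-\det\tilde g}\le 2$ times its Lebesgue measure in $\tilde\varphi$-coordinates, and that measure is finite.
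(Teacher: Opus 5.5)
Your overall strategy is the right one, and it is the mechanism the paper points to: use the dichotomy of Theorem~\ref{thm:boundaryrigidity}, and in the corner case (a) squeeze a region of infinite $g_{\Q_{RNV}}$-volume into a near Minkowski neighbourhood of finite $\tilde g$-volume. The gap is the step you flag yourself, the containment $\iota_{\Q_{RNV}}((u_{\mathcal H^+},u_T)\times(V_*,0))\subset\tilde U$, and it is not a technicality. Continuity of the extended map in case (a) gives this containment on every compact $u$-interval. But there $\Omega'$ is bounded and the volume is finite, so the entire content is uniformity as $u\to u_{\mathcal H^+}$ or $u\to u_T$. Your argument does not supply this uniformity:
\begin{itemize}
\item The set $D$ you describe, a strip along both past branches of $\tilde p_{u_*}$, reaches $\partial\tilde U$, so it is not $\Subset\tilde U$.
\item Your first-exit analysis lists two ways of leaving $D$ but omits the relevant one, namely leaving through $\partial\tilde U$. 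For example, as $u\to u_{\mathcal H^+}$ the constant-$V_*$ curve could run to the past along the strip next to the left-moving branch and out of $\tilde U$.
\item Once a curve leaves $\tilde U$, membership in $J^-(\tilde p_{u_*})$ in the arbitrary manifold $\tilde\Q_{RNV}$ carries no local information.
\item Your fallback presupposes the containment.
\end{itemize}

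What is missing is a quantitative confinement that closes a continuity argument in $u$. Let $\bar\iota$ be the continuous extension from case (a), and let $p,q$ be the null coordinates of Lemma~\ref{lemma:derivativecontrol} (not your point labels).
\begin{itemize}
\item On the component where \eqref{eqn:monotonicityconventions} holds, you have $|\partial_V p|\le c\,\partial_V q$ and $|\partial_u q|\le c\,\partial_u p$ with $c$ small, as in \eqref{eqn:gradientcontrol}.
\item Integrating along a constant-$u$ curve up to its endpoint $\tilde p_{u_*}=(0,0)$ gives $q(u,V)<0$ and $|p(u,V)|\le c|q(u,V)|$.
\item Inserting this along the constant-$V_*$ curve from $u_*$ gives $|q(u,V_*)-q_1|\le c^2(|q(u,V_*)|+|q_1|)$, where $q_1=q(\iota_{\Q_{RNV}}(u_*,V_*))$.
\item Hence every segment $\bar\iota(\{u\}\times[V_*,0])$ reached from $u_*$ through segments lying in $\tilde U$ lies in the compact wedge $K=\{-\tfrac{1+c^2}{1-c^2}|q_1|\le q\le 0,\ |p|\le c|q|\}$. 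This wedge sits around the right-moving past branch only, not along both branches.
\end{itemize}
Now fix an open box $O$ with $K\subset O\Subset\tilde U$, and let $I$ be the set of $u$ such that all segments between $u_*$ and $u$ lie in $O$. The estimate improves $O$ to $K$. Uniform continuity of $\bar\iota$ on compact sets $[a,b]\times[V_*,0]$ makes $I$ open, and closedness of $K$ makes $I$ closed. So $I=(u_{\mathcal H^+},u_T)$, and the volume contradiction follows.

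Separately, your pasting step in case (b) is not valid as stated. A continuous injection that is a homeomorphism on each of two closed pieces need not be one globally unless the images of the pieces are closed in the total image. For instance, take the $V$-independent $\Omega'(u)=(u-u_{\mathcal H^+})^{-1}$ and $\tilde\Q=(u_{\mathcal H^+},u_T)\times(\mathbb{R}/|V_0|\mathbb{Z})$. All hypotheses hold, yet the extension to $[u_*-\epsilon,u_*+\epsilon]\times(V_0,0]$ is a continuous bijection onto a compact annulus, hence not a homeomorphism. Only the local equivalence at $(u_*,0)$ follows from (b), and that is what the paper uses.
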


\section{The inextendibility result}
\subsection{Controlling the differentiable structure}\label{Controlling the differentiable structure}

Before proving Theorem \ref{thm:Christoffelblowup}, we revisit the setting of \cite{CameronSbierski} and consider $C^0$-extensions $\iota_{\Q} : \Q \to \tilde{\Q}$ of $(\Q, g_{\Q})$ across $\{v=0\}$. We will derive some control on the $C^1$-differentiable structure in the general setting of Proposition \ref{prop:boundarystructure}. In particular we show that although individual derivatives of $\iota_\Q(u,v)$ may blow-up, the continuity of $\tilde{g}$ allows us to bound certain derivatives of $\iota_\Q(u,v)$ (and its inverse) in terms of others.

\begin{lemma}\label{lemma:derivativecontrol} Consider the setting of Proposition \ref{prop:boundarystructure} with $(\tilde{U},\tilde{\varphi}=(\tilde{x}_0,\tilde{x}_1))$ being a near Minkowski neighbourhood centred at $\tilde{p}_{u_*}\in \partial\iota_\Q(\Q)$.  Define $p=\frac{1}{\sqrt{2}}(\tilde{x}_0-\tilde{x}_1)$ and $q=\frac{1}{\sqrt{2}}(\tilde{x}_0+\tilde{x}_1)$. Then for any\footnote{The estimate is of interest for small $\alpha >0$.} $\alpha>0$, after possibly reducing $\tilde{U}$, $|v_*|$, and $\epsilon$, we have
               \begin{equation}\label{eqn:derivbounds}
                    \begin{split}
                     \frac{1}{ 2}\left|\frac{\partial u}{\partial p}\right|&\leq \frac{1}{\Omega^2}\left|\frac{\partial q}{\partial v}\right|\leq  2\left|\frac{\partial u}{\partial p}\right|,\quad
        \frac{1}{ 2}\left|\frac{\partial u}{\partial q}\right|\leq \frac{1}{\Omega^2}\left|\frac{\partial p}{\partial v}\right|\leq  2\left|\frac{\partial u}{\partial q}\right|,\\
        \frac{1}{ 2}\left|\frac{\partial v}{\partial p}\right|&\leq \frac{1}{\Omega^2}\left|\frac{\partial q}{\partial u}\right|\leq  2\left|\frac{\partial v}{\partial p}\right|,\quad
        \frac{1}{ 2}\left|\frac{\partial v}{\partial q}\right|\leq \frac{1}{\Omega^2}\left|\frac{\partial p}{\partial u}\right|\leq  2\left|\frac{\partial v}{\partial q}\right|.
\end{split}
\end{equation}
and     
\begin{equation}\label{eqn:gradientboundderivs}
    \begin{split}
                \left|\frac{\partial u}{\partial p}\right|&\geq \frac{1}{\alpha}\left|\frac{\partial u}{\partial q}\right|,\quad\left|\frac{\partial v}{\partial q}\right|\geq \frac{1}{\alpha}\left|\frac{\partial v}{\partial p}\right|
                \end{split}
                \end{equation}
in $\iota_\Q\left([u_*-\epsilon,u_*+\epsilon]\times[v_*,0)\right) \subseteq \tilde{U}$.
\end{lemma}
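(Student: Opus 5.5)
The plan is to treat $\iota_\Q$ on $[u_*-\epsilon,u_*+\epsilon]\times[v_*,0)$ as a coordinate change between $(u,v)$ and $(p,q)$. We express everything through the Jacobian and the transformation law of the metric. The first set of bounds \eqref{eqn:derivbounds} should come purely from the determinant of the metric. The second set \eqref{eqn:gradientboundderivs} should come from the nullity of $\partial_u$ and $\partial_v$, combined with the sign conventions \eqref{eqn:monotonicityconventions} of Proposition \ref{prop:boundarystructure}.

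\textbf{Step 1 (determinant).} Let $J=\det\frac{\partial(p,q)}{\partial(u,v)}$. Inverting the $2\times2$ Jacobian gives
\begin{equation}
\frac{\partial u}{\partial p}=\frac{1}{J}\frac{\partial q}{\partial v},\quad \frac{\partial u}{\partial q}=-\frac{1}{J}\frac{\partial p}{\partial v},\quad \frac{\partial v}{\partial p}=-\frac{1}{J}\frac{\partial q}{\partial u},\quad \frac{\partial v}{\partial q}=\frac{1}{J}\frac{\partial p}{\partial u}.
\end{equation}
The map $(\tilde x_0,\tilde x_1)\mapsto(p,q)$ is a rotation with unit Jacobian, so $\det \tilde g$ in $(p,q)$ coordinates equals $\det\tilde g$ in $\tilde\varphi$-coordinates. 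On the other hand, $\det \tilde g_{(p,q)}=\det g_{(u,v)}/J^2=-\Omega^4/J^2$. Hence \eqref{eqn:detbound} gives $\frac12\le \Omega^2/|J|\le 2$. Substituting this into the four inversion identities yields \eqref{eqn:derivbounds} directly, with no reduction of the neighbourhood needed.

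\textbf{Step 2 (nullity of $\partial_u$, $\partial_v$).} Write $\tilde g=\tilde g_{pp}dp^2+2\tilde g_{pq}dp\,dq+\tilde g_{qq}dq^2$. In the near Minkowski chart we have $|\tilde g_{pp}|,|\tilde g_{qq}|\lesssim\delta$ and $|\tilde g_{pq}+1|\lesssim\delta$. Set $P=\partial p/\partial u$ and $Q=\partial q/\partial u$. Since $\partial_u$ is null,
\begin{equation}
\tilde g_{pp}P^2+2\tilde g_{pq}PQ+\tilde g_{qq}Q^2=0 .
\end{equation}
The ratio $Q/P$ is therefore a root of a quadratic whose two roots are of size $O(\delta)$ and of size $O(1/\delta)$. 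Now \eqref{eqn:monotonicityconventions} gives $\partial\tilde x_0/\partial u>0>\partial\tilde x_1/\partial u$, i.e.\ $P>|Q|$, which excludes the large root. Hence $|\partial q/\partial u|\le C\delta\,|\partial p/\partial u|$.

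In the same way, $\partial_v$ is null and $\partial\tilde x_0/\partial v,\partial\tilde x_1/\partial v>0$ gives $\partial q/\partial v>|\partial p/\partial v|$. This yields $|\partial p/\partial v|\le C\delta\,|\partial q/\partial v|$.

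\textbf{Step 3 (conclusion).} Feeding these into the inversion identities of Step 1 gives
\begin{equation}
\Big|\frac{\partial u}{\partial q}\Big|=\frac{1}{|J|}\Big|\frac{\partial p}{\partial v}\Big|\le C\delta\,\Big|\frac{\partial u}{\partial p}\Big|,\qquad \Big|\frac{\partial v}{\partial p}\Big|\le C\delta\,\Big|\frac{\partial v}{\partial q}\Big|.
\end{equation}
It therefore suffices to have $C\delta\le\alpha$. We arrange this by shrinking $\tilde U$: the metric is continuous at $\tilde p_{u_*}$ and equals $m$ there, so on a smaller neighbourhood $\delta$ is as small as we like. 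We then shrink $|v_*|$ and $\epsilon$ so that $\iota_\Q([u_*-\epsilon,u_*+\epsilon]\times[v_*,0))$ still lies in the smaller $\tilde U$. This is possible by the continuous extension of $\iota_\Q$ to $v=0$ in Proposition \ref{prop:boundarystructure}, and the sign conventions are preserved on the subregion.

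The main obstacle is Step 2: making sure the root selection is uniform in the region. The monotonicity signs hold on the whole connected component, so the choice of the small root is forced pointwise. I expect only the quadratic-root estimate to require care, namely showing that the small root is bounded by $C\delta$ uniformly, using $|\tilde g_{pq}|\ge 1/2$.
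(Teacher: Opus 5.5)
Your proposal is correct and follows essentially the same route as the paper: \eqref{eqn:derivbounds} comes from inverting the Jacobian together with the determinant bound \eqref{eqn:detbound}; the smallness of $|\partial q/\partial u|/|\partial p/\partial u|$ and $|\partial p/\partial v|/|\partial q/\partial v|$ comes from the nullity of $\partial_u,\partial_v$, with the sign conventions selecting the correct null direction; and the two are combined via the inversion identities. The difference is minor: the paper argues by continuity of null vector fields $L,\underline{L}$ close to $\partial_p,\partial_q$, whereas you make this step quantitative, and in the last step you use the exact ratio identities instead of the factor-$2$ bounds. Your anticipated ``main obstacle'' is in fact trivial: plugging $|Q/P|<1$ into the null condition gives $|Q/P|\le (|\tilde g_{pp}|+|\tilde g_{qq}|)/(2|\tilde g_{pq}|)\le 4\delta$ at once, with no discussion of roots needed.
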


Here, and in the following, we have omitted the obvious identifications, e.g. $\frac{\rd p}{\rd u}$ means $\frac{\rd (p\circ\iota_\Q)}{\rd u}$.

\begin{proof}We have 
\begin{equation}\label{eqn:derivativeinversion}
    \begin{split}
        \begin{pmatrix}
\frac{\partial u}{\partial p} & \frac{\partial u}{\partial q} \\
\frac{\partial v}{\partial p} & \frac{\partial v}{\partial q}
\end{pmatrix}
&=\frac{1}{\det J(u,v)}\begin{pmatrix}
\frac{\partial q}{\partial v} & -\frac{\partial p}{\partial v} \\
-\frac{\partial q}{\partial u} & \frac{\partial p}{\partial u}
\end{pmatrix}
    \end{split}
\end{equation}
in $\iota_\Q\left([u_*-\epsilon,u_*+\epsilon]\times[v_*,0)\right)$, where the determinant of the Jacobian matrix $$J(u,v)=\begin{pmatrix}
\frac{\partial p}{\partial u} & \frac{\partial p}{\partial v} \\
\frac{\partial q}{\partial u} & \frac{\partial q}{\partial v}
\end{pmatrix} = D \iota_{\Q}$$ can be estimated using \eqref{eqn:detbound} as follows:
\begin{equation}
    \begin{split}
        \left|\det J(u,v)\right|&=\frac{\sqrt{-\det g_\Q}}{\sqrt{-\det\tilde{g}_\Q}}\\
        &\in\left[\frac{1}{2}\Omega(u,v)^2,2\Omega(u,v)^2\right].
    \end{split}
\end{equation}
Combining this with \eqref{eqn:derivativeinversion}, we deduce that \eqref{eqn:derivbounds} holds in $\iota_\Q\left([u_*-\epsilon,u_*+\epsilon]\times[v_*,0)\right)$.

To prove \eqref{eqn:gradientboundderivs}, note that the metric at $\tilde{p}_{u_*}$ is
$$\tilde{g}_{\Q}\vert_{\tilde{p}_{u_*}}=-dp\otimes dq-dq\otimes dp$$
and hence $\partial_p$ and $\partial_q$ are null and linearly independent at $\tilde{p}_{u_*}$. 
We have 
\begin{equation}
      \frac{\partial\tilde{x}_1}{\partial p}<0<\frac{\partial\tilde{x}_0}{\partial p},\frac{\partial\tilde{x}_0}{\partial q},\frac{\partial\tilde{x}_1}{\partial q}
\end{equation} 
and hence comparing with the monotonicity properties \eqref{eqn:monotonicityconventions} we see that$\frac{(\iota_\Q)_*\partial_u}{\lVert(\iota_\Q)_*\partial_u\rVert_{\R^2}}$ and $\frac{(\iota_\Q)_*\partial_v}{\lVert(\iota_\Q)_*\partial_v\rVert_{\R^2}}$ defined in $\iota_\Q\left([u_*-\epsilon,u_*+\epsilon]\times[v_*,0)\right)$ extend continuously to $\partial_p$ and $\partial_q$ respectively at $\tilde{p}_{u_*}$.  Here $\lVert.\rVert_{\R^2}$ denotes the norm defined with respect to the Riemannian metric $\delta=d\tilde{x}_0\otimes d\tilde{x}_0+d\tilde{x}_1\otimes d\tilde{x}_1$ on $\tilde{U}$ and 
\begin{equation}
\begin{split}
        (\iota_\Q)_*\partial_u&=\frac{\partial p}{\partial u}\partial_p+\frac{\partial q}{\partial u}\partial_q\\
     (\iota_\Q)_*\partial_v&=\frac{\partial p}{\partial v}\partial_p+\frac{\partial q}{\partial v}\partial_q.
\end{split}
\end{equation}
Let $L$ and $\underline{L}$ be two continuous, linearly independent null vector fields in $\tilde{U}$ such that $L\vert_{\tilde{p}_{u_*}}=\partial_p$ and $\underline{L}\vert_{\tilde{p}_{u_*}}=\partial_q$. We see that, in $\iota_\Q\left([u_*-\epsilon,u_*+\epsilon]\times[v_*,0)\right)$, $L$ and $\underline{L}$ are proportional to $(\iota_\Q)_*\partial_u$ and $(\iota_\Q)_*\partial_v$ respectively.

Let $\alpha>0$. By continuity, we can choose $\tilde{U}$ sufficiently small that 
\begin{equation}
    \frac{\langle\partial_p,L\rangle_{\R^2}}{\lVert L\rVert_{\R^2}},\frac{\langle\partial_q,\underline{L}\rangle_{\R^2}}{\lVert \underline{L}\rVert_{\R^2}}\geq\frac{1}{\sqrt{1+\left(\frac{\alpha}{4}\right)^2}}
\end{equation}
in $\tilde{U}$, where $\langle.,.\rangle_{\R^2}$ denotes the inner product defined with respect to $\delta$.
It follows that
\begin{equation}\label{eqn:gradientcontrol}
    \begin{split}
        \left|\frac{\partial q}{\partial u}\right|&\leq\frac{\alpha}{4} \left|\frac{\partial p}{\partial u}\right|\\
        \text{and }\left|\frac{\partial p}{\partial v}\right|&\leq\frac{\alpha}{4}\left|\frac{\partial q}{\partial v}\right|
    \end{split}
\end{equation}
in $\iota_\Q\left([u_*-\epsilon,u_*+\epsilon]\times[v_*,0)\right)$.


Combining \eqref{eqn:derivbounds} and \eqref{eqn:gradientcontrol}, it follows that in $\iota_\Q\left([u_*-\epsilon,u_*+\epsilon]\times[v_*,0)\right)$ we have
\begin{equation}
    \begin{split}
                \left|\frac{\partial u}{\partial p}\right|&\geq\frac{1}{ 2\Omega^2}\left|\frac{\partial q}{\partial v}\right|\\
                &\geq\frac{2}{\alpha\Omega^2}\left|\frac{\partial p}{\partial v}\right|\\
                &\geq  \frac{1}{\alpha}\left|\frac{\partial u}{\partial q}\right| \\
        \text{and similarly}\left|\frac{\partial v}{\partial q}\right|&\geq \frac{1}{\alpha}\left|\frac{\partial v}{\partial p}\right|.
    \end{split}
\end{equation}
\end{proof}

The proof of Theorem \ref{thm:Christoffelblowup} also requires the following elementary lemma.

\begin{lemma}\label{lemma:integralbalance}
    Let $\mathcal{D} \subseteq \R^2$ be open and let $f,h : \mathcal{D} \to \R$ be measurable functions. For each $v \in (-1,0)$ let $\mathcal{D}_v \subseteq \mathcal{D}$ be measurable. Let ${s}>1$ and assume $f \in L^{s}(\mathcal{D})$, $h \cdot \mathbbm{1}_{\mathcal{D}_v} \in L^{s}(\mathcal{D})$ for all $-1<v <0$, and $||h \cdot \mathbbm{1}_{\mathcal{D}_v}||_{L^{s}(\mathcal{D})} \to \infty$ as $v \to 0$. Then
    \begin{equation*}
        \frac{|| (f-h)\cdot \mathbbm{1}_{\mathcal{D}_v}||_{L^{s}(\mathcal{D})}}{||h \cdot \mathbbm{1}_{\mathcal{D}_v}||_{L^{s}(\mathcal{D})}} \to 1 \qquad \textnormal{ for } v \to 0 \;.
    \end{equation*}
\end{lemma}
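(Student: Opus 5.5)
This is the reverse triangle inequality in $L^s$, followed by division by the diverging norm. Write $N_v := \|h\cdot\mathbbm{1}_{\mathcal{D}_v}\|_{L^s(\mathcal{D})}$. Since $N_v\to\infty$, there is $v_0<0$ with $N_v>0$ for $v\in(v_0,0)$, and the quotient is well defined there. We do not need to know that the quotient is finite a priori, because the bounds below show it.

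For $v$ in this range, Minkowski's inequality in $L^s(\mathcal{D})$, valid since $s>1$, gives
\begin{equation*}
\Bigl|\,\|(f-h)\mathbbm{1}_{\mathcal{D}_v}\|_{L^s} - \|h\mathbbm{1}_{\mathcal{D}_v}\|_{L^s}\Bigr| \le \|f\mathbbm{1}_{\mathcal{D}_v}\|_{L^s} \le \|f\|_{L^s(\mathcal{D})}.
\end{equation*}
The second inequality uses $|f\mathbbm{1}_{\mathcal{D}_v}|\le|f|$ pointwise. In particular $(f-h)\mathbbm{1}_{\mathcal{D}_v}\in L^s(\mathcal{D})$, since it is the difference of two $L^s$ functions.

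Dividing by $N_v$ gives
\begin{equation*}
\left|\frac{\|(f-h)\mathbbm{1}_{\mathcal{D}_v}\|_{L^s}}{N_v}-1\right| \le \frac{\|f\|_{L^s(\mathcal{D})}}{N_v}.
\end{equation*}
The right-hand side tends to $0$ as $v\to0$, because $\|f\|_{L^s}$ is a fixed finite number and $N_v\to\infty$. This proves the claim.

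There is no real obstacle here. The only points needing care are that $N_v$ is nonzero near $v=0$, and that the bound on $f$ must be uniform in $v$. The latter is exactly why the lemma assumes $f\in L^s$ on all of $\mathcal{D}$ rather than only on each $\mathcal{D}_v$.
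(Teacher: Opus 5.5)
Your proof is correct and is essentially the paper's argument: the paper also uses the Minkowski inequality to get $\|h\mathbbm{1}_{\mathcal{D}_v}\|_{L^s(\mathcal{D})} - \|f\|_{L^s(\mathcal{D})} \le \|(f-h)\mathbbm{1}_{\mathcal{D}_v}\|_{L^s(\mathcal{D})} \le \|h\mathbbm{1}_{\mathcal{D}_v}\|_{L^s(\mathcal{D})} + \|f\|_{L^s(\mathcal{D})}$, and then divides by the diverging norm. Your extra checks, that $N_v>0$ near $v=0$ and that the bound on $f$ is uniform in $v$, are details the paper leaves implicit.
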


\begin{proof}
    By the Minkowski inequality we have
    \begin{equation*}
        ||h \cdot \mathbbm{1}_{\mathcal{D}_v}||_{L^{s}(\mathcal{D})} - ||f||_{L^{s}(\mathcal{D})} \leq || (f-h)\cdot \mathbbm{1}_{\mathcal{D}_v}||_{L^{s}(\mathcal{D})} \leq ||h \cdot \mathbbm{1}_{\mathcal{D}_v}||_{L^{s}(\mathcal{D})} + ||f||_{L^{s}(\mathcal{D})} \;.
    \end{equation*}
    Division by $||h \cdot \mathbbm{1}_{\mathcal{D}_v}||_{L^{s}(\mathcal{D})}$ yields the result.
\end{proof}

\subsection{Proof of Theorem \ref{thm:Christoffelblowup}}

Using the results of Section \ref{Controlling the differentiable structure} we are now able to prove Theorem \ref{thm:Christoffelblowup}. Recall that the metric on $\iota(\Mw)$ is
\begin{equation}\label{eqn:extendedmetric}
    \tilde{g}=\tilde{g}_{\Q}+\tilde{r}^2\gamma_{AB}d\theta^A\otimes d\theta^B
\end{equation} 
where $\tilde{r}(\iota_{\Q}(u,v))=r(u,v)$ and $\tilde{g}_{\Q}=(\iota_{\Q})_*g_{\Q}$.


\begin{proof}[Proof of Theorem \ref{thm:Christoffelblowup}] 
Suppose for a contradiction that $\iota=\iota_\Q\times\id_{\Sp^2}:\Mw\to\tilde{M}=\tilde{\Q}\times\Sp^2$ is a strongly spherically symmetric $C^0 \cap W^{1,s}_{\loc}$-extension of $(\Mw,\gw)$ 
such that $\tilde{p}_{u_*}:=\lim_{v\to0}\iota_\Q(u_*,v)$ exists and $\iota_\Q$ is $C^0$-equivalent to $\overline{\Q}$ at $(u_*,0)$.
Let $(\tilde{u},\tilde{v})$ be local coordinates of $\tilde{\Q}$ in a neighbourhood of $\tilde{p}_{u_*}$ so that $\tilde{g} \in C^0 \cap W^{1,s}_{\loc}$ with respect to the $(\tilde{u}, \tilde{v}, \theta^A)$ coordinates on $\tilde{M}$.
Let $(\tilde{U},\tilde{\varphi}=(\tilde{x}_0,\tilde{x}_1))$ be a near Minkowski neighbourhood on $\tilde{\Q}$ centred at $\tilde{p}_{u_*}$ where, by the proof of \cite[Lemma 2.4]{SbierskiSchw}, $(\tilde{x}_0,\tilde{x}_1)$ are linearly related to $(\tilde{u},\tilde{v})$. We define $(p,q)$ by $p=\frac{1}{\sqrt{2}}(\tilde{x}_0-\tilde{x}_1)$ and $  q=\frac{1}{\sqrt{2}}(\tilde{x}_0+\tilde{x}_1)$ so that, after restricting $\tilde{U}$ if necessary, the 
conclusions of Lemma \ref{lemma:derivativecontrol} hold for some $\alpha\in\left(0,\frac{1}{2}\right)$. Note that the linear coordinate change from $(\tilde{u},\tilde{v},\theta^A)$ to $(p,q,\theta^A)$ does not affect whether or not 
$\tilde{g}$ lies in $C^0 \cap W^{1,s}_{\loc}$: the two charts are part of the same smooth structure on $\tilde{M}$. Hence, after restricting $\tilde{U}$ further if necessary, we conclude that $\partial_p\tilde{r}$ and $\partial_q\tilde{r}$ lie in $L^s(\tilde{U})$.

We have
\begin{equation} \label{EqDr}
   \begin{aligned}
      \frac{\partial \tilde{r}}{\partial p}&=\frac{\partial r}{\partial u}\frac{\partial u}{\partial p}+\frac{\partial r}{\partial v}\frac{\partial v}{\partial p}\\
     \frac{\partial \tilde{r}}{\partial q}&=\frac{\partial r}{\partial u}\frac{\partial u}{\partial q}+\frac{\partial r}{\partial v}\frac{\partial v}{\partial q}.
\end{aligned}
\end{equation}

In the remainder of the proof, we will contradict Lemma \ref{lemma:integralbalance} (with $f=\frac{\partial \tilde{r}}{\partial p}$ and $h=\frac{\partial r}{\partial u}\frac{\partial u}{\partial p}$) by showing that
\begin{equation} \label{EqBlowUp}
    \left\| \frac{\partial r}{\partial u}\frac{\partial u}{\partial p}\cdot \mathbbm{1}_{\iota_{\Q}([u_* - \epsilon, u_* + \epsilon]\times [v_*,v))}\right\|_{L^{s}(\tilde{U})}\to\infty \qquad \textnormal{ as } v\to0\;.
\end{equation}
but 
\begin{equation}\label{eqn:limitnot1}
        \frac{|| \frac{\partial r}{\partial v}\frac{\partial v}{\partial p}\cdot \mathbbm{1}_{\iota_{\Q}([u_* - \epsilon, u_* + \epsilon]\times [v_*,v))}||_{L^{s}(\tilde{U})}}{||\frac{\partial r}{\partial u}\frac{\partial u}{\partial p} \cdot \mathbbm{1}_{\iota_{\Q}([u_* - \epsilon, u_* + \epsilon]\times [v_*,v))}||_{L^{s}(\tilde{U})}} \not\to 1 \qquad \textnormal{ as } v \to 0 \;.
\end{equation}
The argument has three main steps. First, the blow-up assumption \eqref{eqn:blowupassumption} in combination with the local $C^0$-equivalence of the extension forces the $\partial_vr\partial_qv$ term in $\partial_q\tilde{r}$ to have divergent $L^s$-norm. Second, since $\partial_q\tilde{r}\in L^s$, the term $\partial_u r\partial_qu$ must asymptotically cancel it and therefore also has divergent $L^s$-norm. Lemma \ref{lemma:derivativecontrol} transfers this divergence to $\partial_ur\partial_pu$ thus showing \eqref{EqBlowUp}. Finally, in the third step, we use the same lemma as well as previous estimates on the relative sizes of terms in \eqref{EqDr} to show \eqref{eqn:limitnot1}.

Recall that $\Omega^2$ extends continuously to a strictly positive function $\Omega^2:[u_*-\epsilon,u_*+\epsilon]\times[v_*,0]\to\R_{>0}$. Hence there exists $C>0$ such that
\begin{equation}\label{eqn:Omegabound}
    \frac{1}{C}\leq \Omega^2(u,v)\leq C
\end{equation}
for all $(u,v)\in[u_*-\epsilon,u_*+\epsilon]\times[v_*,0]$.

It follows from \eqref{eqn:monotonicityconventions} that $\frac{\partial p}{\partial u}>0$ and hence $p(u_*+\epsilon,v)-p(u_*-\epsilon,v)>0$ for all $v\in[v_*,0)$. Since we are in case (b) of Theorem \ref{thm:boundaryrigidity}, we have $\lim_{v\rightarrow0}\iota_\Q(u_*-\epsilon,v)\neq \lim_{v\rightarrow0}\iota_\Q(u_*+\epsilon,v)$ and thus $p(u_*+\epsilon,0)-p(u_*-\epsilon,0)>0$ by \cite[Proposition 3.2 (b)]{CameronSbierski}. Hence there exists $C'>0$ such that 
\begin{equation}
    p(u_*+\epsilon,v)-p(u_*-\epsilon,v)\geq C'
\end{equation}
for all $v\in[v_*,0]$.

It follows from Lemma \ref{lemma:derivativecontrol} that
\begin{equation} \label{EqContrC1}
    \begin{split}
        \inf_{v'\in[v_*,0)}\int_{u_*-\epsilon}^{u_*+\epsilon} \left|\frac{\partial v}{\partial q}\right|^{s} du\geq& 2^{-s}\inf_{v'\in[v_*,0)}\int_{u_*-\epsilon}^{u_*+\epsilon}\Omega^{-2s} \left|\frac{\partial p}{\partial u}\right|^{s} du\\
        \geq& \frac{(2\epsilon)^{1-s}}{(2C)^{s} }\inf_{v'\in[v_*,0)}\left|\int_{u_*-\epsilon}^{u_*+\epsilon} \frac{\partial p}{\partial u}du\right|^{s}\\
        \geq& \frac{(2\epsilon)^{1-s}}{(2C)^{s} }\inf_{v'\in[v_*,0)}\left|(p(u_*+\epsilon,v')-p(u_*-\epsilon,v')\right|^{s}\\
        \geq& \frac{(2\epsilon)^{1-s}}{(2C)^{s} }C'^{s}
    \end{split}
\end{equation}
where in the second line we have used H\"{o}lder's inequality along with \eqref{eqn:Omegabound}.

Using \eqref{eqn:detbound}, the fact that $\iota_\Q$ is an isometry, and \eqref{eqn:blowupassumption}, we have
\begin{align}
        \int_{\iota_{\Q}([u_*-\epsilon,u_*+\epsilon]\times[v_*,v])}\left|\frac{\partial r}{\partial v}\frac{\partial v}{\partial q}\right|^{s} dpdq&\geq \frac{1}{2}\int_{\iota_{\Q}([u_*-\epsilon,u_*+\epsilon]\times[v_*,v])}\left|\frac{\partial r}{\partial v}\frac{\partial v}{\partial q}\right|^{s} \sqrt{-\det \tilde{g}_{\Q}}dpdq\\
        &= \frac{1}{2}\int_{[u_*-\epsilon,u_*+\epsilon]\times[v_*,v]}\left|\frac{\partial r}{\partial v}\frac{\partial v}{\partial q}\right|^{s} \Omega^2(u',v')du'dv'\\
        &\geq \frac{1}{2C}\int_{v_*}^v \inf_{u\in[u_*-\epsilon,u_*+\epsilon]}\left|\frac{\partial r}{\partial v}\right|^{s} dv' \inf_{v'\in[v_*,v)}\int_{u_*-\epsilon}^{u_*+\epsilon} \left|\frac{\partial v}{\partial q}\right|^{s} du'\\
        &\geq \frac{(2\epsilon)^{1-s}}{(2C)^{s+1} }C'^{s} \int_{v_*}^v \inf_{u\in[u_*-\epsilon,u_*+\epsilon]}\left|\frac{\partial r}{\partial v}\right|^{s} dv' \\
        &\to\infty\text{ as }v\to 0.\label{eqn:divergence}
\end{align}

It now follows from Lemma \ref{lemma:integralbalance} (with $\mathcal{D} = \iota_{\Q}((u_*-\epsilon,u_*+\epsilon)\times(v_*,0))$, $f=\partial_q\tilde{r}$, $h=\frac{\partial r}{\partial v}\frac{\partial v}{\partial q}$, and $\mathcal{D}_v = \iota_{\Q}((u_*-\epsilon,u_*+\epsilon)\times(v_*,v])$) and \eqref{EqDr}, \eqref{eqn:divergence} that for $v<0$ sufficiently small we have
    \begin{align}
 \int_{\iota_{\Q}([u_*-\epsilon,u_*+\epsilon]\times[v_*,v])}\left|\frac{\partial r}{\partial u}\frac{\partial u}{\partial q}\right|^{s}dpdq&\geq \frac{1}{2}\int_{\iota_{\Q}([u_*-\epsilon,u_*+\epsilon]\times[v_*,v])}\left|\frac{\partial r}{\partial v}\frac{\partial v}{\partial q}\right|^{s}dpdq\label{eqn:ineq}\\
    &\to\infty\text{ as }v\to0.\label{eqn:notL2}
\end{align}
Combining this with Lemma \ref{lemma:derivativecontrol}, we conclude that
\begin{align}
    \int_{\iota_{\Q}([u_*-\epsilon,u_*+\epsilon]\times[v_*,v])}\left|\frac{\partial r}{\partial u}\frac{\partial u}{\partial p}\right|^{s}dpdq&\geq \frac{1}{\alpha^{s}} \int_{\iota_{\Q}([u_*-\epsilon,u_*+\epsilon]\times[v_*,v])}\left|\frac{\partial r}{\partial u}\frac{\partial u}{\partial q}\right|^{s}dpdq\label{eqn:bound}\\
    &\to\infty\text{ as }v\to0\label{eqn:ineq1}
\end{align}
which shows \eqref{EqBlowUp}.

Combining \eqref{eqn:ineq}, \eqref{eqn:bound} and applying Lemma \ref{lemma:derivativecontrol}, we obtain
\begin{align}
         \int_{\iota_{\Q}([u_*-\epsilon,u_*+\epsilon]\times[v_*,v])}\left|\frac{\partial r}{\partial u}\frac{\partial u}{\partial p}\right|^{s}dpdq&\geq\frac{1}{2\alpha^{s}} \int_{\iota_{\Q}([u_*-\epsilon,u_*+\epsilon]\times[v_*,v])}\left|\frac{\partial r}{\partial v}\frac{\partial v}{\partial q}\right|^{s}dpdq\\
         &\geq\frac{1}{2\alpha^{2s}} \int_{\iota_{\Q}([u_*-\epsilon,u_*+\epsilon]\times[v_*,v])}\left|\frac{\partial r}{\partial v}\frac{\partial v}{\partial p}\right|^{s}dpdq.\label{eqn:ineq2}
\end{align}

Since we chose $\alpha<\frac{1}{2}$ initially, we conclude from \eqref{eqn:ineq2} that
\begin{equation}
\begin{split}
            \frac{|| \frac{\partial r}{\partial v}\frac{\partial v}{\partial p}\cdot \mathbbm{1}_{\iota_{\Q}([u_* - \epsilon, u_* + \epsilon]\times [v_*,v))}||_{L^{s}(\tilde{U})}}{||\frac{\partial r}{\partial u}\frac{\partial u}{\partial p} \cdot \mathbbm{1}_{\iota_{\Q}([u_* - \epsilon, u_* + \epsilon]\times [v_*,v))}||_{L^{s}(\tilde{U})}}&\leq 2^{1/s}\alpha^{2}<1
\end{split}
\end{equation}
  which shows \eqref{eqn:limitnot1} and contradicts Lemma \ref{lemma:integralbalance} as outlined before.
    \end{proof}

\subsection{The Corner Extension}\label{The Corner Extension}
The following example (inspired by \cite[Example 4.1]{CameronSbierski}) shows that Theorem \ref{thm:Christoffelblowup} is false if we drop the assumption that $\iota_\Q$ is locally $C^0$-equivalent to $\overline{Q}$. 

\begin{example}
    Let $s>1$ and choose $\beta\in\left(\frac{1}{2},1\right)$ such that $\beta>\frac{2s-1}{3s-1}$ and $s\beta>\text{max}\{1,s-1\}$.\footnote{Note that in \cite[Example 4.1]{CameronSbierski} we only required $\beta\in\left(\frac{1}{2},1\right)$.} 
   We will construct an example of a strongly spherically symmetric $C^0\cap W^{1,s}_{\loc}$-extension of $(\Mw,\gw)$ across $\{v=0\}$ satisfying all assumptions of Theorem \ref{thm:Christoffelblowup} except that $\iota_{\Q}$ will not be $C^0$-equivalent to $\overline{Q}$. 

Let $r(u,v)=1+|v|^{1-\beta}$ and suppose $\Omega(u,v)=1$ for all $(u,v)\in(-1,1)\times(-1,0)$. Note that $r$ and $\Omega$ extend as continuous strictly positive functions to $[-1,1]\times(-1,0]$ and $r(u,0)$ is independent of $u$. Let $\tilde{\Q}$ be the 1+1-dimensional Lorentzian manifold with global coordinates $(p,q)\in(-1,1)\times(-1,1)$ and let $\tilde{g}_{\beta,\tilde{\Q}}$ be the metric on $\tilde{\Q}$ given in these coordinates by 
    \begin{equation}
        \tilde{g}_{\beta,\tilde{\Q}}=\begin{cases}
            -\frac{1}{(1-\beta)}\left(dp\otimes dq + dq\otimes dp\right)+ \frac{2\beta p}{(1-\beta)^2q}dq\otimes dq & |p| < |q|^{\frac{\beta}{1-\beta}}, \text{ }-1<q<0;\\
            -\frac{1}{(1-\beta)}\left(dp\otimes dq + dq\otimes dp\right) +\frac{2\beta }{(1-\beta)^2}|q|^{\frac{2\beta-1}{1-\beta}}dq\otimes dq & \text{ }p\leq -|q|^{\frac{\beta}{1-\beta}}, -1<q<0;\\
            -\frac{1}{(1-\beta)}\left(dp\otimes dq + dq\otimes dp\right) - \frac{2\beta }{(1-\beta)^2}|q|^{\frac{2\beta-1}{1-\beta}}dq\otimes dq & \text{ }p\geq|q|^{\frac{\beta}{1-\beta}}, -1<q<0;\\
            -\frac{1}{(1-\beta)}\left(dp\otimes dq + dq\otimes dp\right)  & \text{ }0\leq q<1.            
        \end{cases}
    \end{equation}
    Note that the restriction $\beta\in\left(\frac{1}{2},1\right)$ ensures $\tilde{g}_{\beta}$ is continuous. 

  Define $\iota_{\beta,\Q}:\Q\hookrightarrow\tilde{\Q}$ in terms of coordinates $(u,v)\in(-1,1)\times(-1,0)$ by
\begin{equation}
    \begin{split}
       \iota_{\beta,\Q}(u,v)&=(u|v|^\beta,-|v|^{1-\beta}).
    \end{split}
\end{equation}
It is shown in \cite[Example 4.1]{CameronSbierski} that $\iota_{\beta,\Q}$ defines a $C^0$-extension of $(\Q,g_\Q)$ across $\{v=0\}$ into $(\tilde{\Q},\tilde{g}_{\beta,\tilde{\Q}})$.

   Consider the metric on $\tilde{M}:=\tilde{\Q}\times \Sp^2$ given in coordinates $(p,q,\theta^A)$ by (abusing notation slightly)
    \begin{equation}\label{eqn:fullcornermetric}
        \tilde{g}_{\beta}=\tilde{g}_{\beta,\tilde{\Q}}+\tilde{r}^2d\sigma^2_{\Sp^2}
    \end{equation}
    where $\tilde{r}(p,q)=1-q$.
    We see that
$\iota_{\beta,SSS}:=\iota_{\beta,\Q}\times{\id}_{\Sp^2}:\Mw\hookrightarrow \tilde{M}$ defines a strongly spherically symmetric $C^0$-extension of $(\Mw,\gw)$ across $\{v=0\}$ into $(\tilde{M},\tilde{g}_{\beta})$. Note that for any $u_*\in(-1,1)$ we have $\lim_{v\to0}\iota_{\beta,\Q}(u_*,v)=(0,0)$ in $(p,q)$ coordinates, so we are in case (a) of Theorem \ref{thm:boundaryrigidity}. In particular, $\iota_{\beta,\Q}$ is not locally $C^0$-equivalent to $\overline{Q}$. However, we have
\begin{equation}
    \begin{split}
       \int_{v_*}^v \inf_{u \in (-1,1)}\left|\partial_vr(u,v')\right|^{s}dv'&=(1-\beta)^{s}\int_{v_*}^v(-v')^{-s\beta}dv'\\
       &=\frac{(1-\beta)^{s}}{s\beta-1}\left(|v|^{1-s\beta}-|v_*|^{1-s\beta}\right)\\
       &\to\infty\text{ as }v\to0
    \end{split}
\end{equation}
as assumed in Theorem \ref{thm:Christoffelblowup}.
It remains to check that $\tilde{g}_{\beta}\in W^{1,s}_{\loc}$. To do this, it suffices to check that the function $f:(-1,1)\times(-1,1)\to\R$ defined by 
\begin{equation}
    f(p,q)=\begin{cases}
        \frac{p}{q} & |p| < |q|^{\frac{\beta}{1-\beta}}, \text{ }-1<q<0;\\
         |q|^{\frac{2\beta-1}{1-\beta}} & \text{ }p\leq -|q|^{\frac{\beta}{1-\beta}}, -1<q<0;\\
            - |q|^{\frac{2\beta-1}{1-\beta}}& \text{ }p\geq|q|^{\frac{\beta}{1-\beta}}, -1<q<0;\\
            0 & \text{ }0\leq q<1. 
    \end{cases}
\end{equation}
lies in $W^{1,s}(\tilde{\Q})$. We have
\begin{equation}
\begin{split}
    \int_{-1}^1\int_{-1}^1|\partial_pf(p,q)|^sdpdq&=\int_{-1}^0\int_{-|q|^{\frac{\beta}{1-\beta}}}^{|q|^{\frac{\beta}{1-\beta}}}|q|^{-s}dpdq\\
    &=2\int_{-1}^0|q|^{\frac{\beta}{1-\beta}-s}dq\\
    &<\infty
\end{split}
\end{equation}
since $s\beta>s-1$. Furthermore,
\begin{equation}
\begin{split}
    \int_{-1}^1\int_{-1}^1|\partial_qf(p,q)|^sdpdq&\leq\int_{-1}^0\int_{-|q|^{\frac{\beta}{1-\beta}}}^{|q|^{\frac{\beta}{1-\beta}}}\frac{|p|^s}{|q|^{2s}}dpdq+2\left(\frac{2\beta-1}{1-\beta}\right)^s\int_{-1}^0\int_{-1}^{-|q|^{\frac{\beta}{1-\beta}}}|q|^{\frac{s(3\beta-2)}{1-\beta}}dpdq\\
    &=\frac{2}{s+1}\int_{-1}^0|q|^{\frac{s(3\beta-2)+\beta}{1-\beta}}dq+2\left(\frac{2\beta-1}{1-\beta}\right)^s\int_{-1}^{0}\left(|q|^{\frac{s(3\beta-2)}{1-\beta}}-|q|^{\frac{s(3\beta-2)+\beta}{1-\beta}}\right)dq\\
    &<\infty
\end{split}
\end{equation}
since $\beta>\frac{2s-1}{3s-1}\implies\frac{s(3\beta-2)}{1-\beta}>-1$.

Since $f$ is continuous, we conclude that $f\in W^{1,s}(\tilde{\Q})$ and hence $\tilde{g}_{\beta}\in W^{1,s}_{\loc}$.


\end{example}

\section{Application to  weak null singularities in the interior of spherically symmetric black holes}\label{Application to  weak null singularities in the interior of spherically symmetric black holes}
In this section we apply Theorem \ref{thm:Christoffelblowup} to the weakly singular Cauchy horizon (the weak null singularity) arising in the interior of certain dynamical spherically symmetric black hole spacetimes. Recall that this theorem only rules out extensions where $\iota_\Q$ is locally $C^0$-equivalent to $\overline{\Q}$. Theorem \ref{thm:boundaryrigidity} implies that if this is not the case then $\lim_{v\to0}\iota_{\Q}(u,v)$ must exist for all $u\in(-1,1)$ and be independent of $u$. In the examples that follow we will rule out this scenario. In Section \ref{RNV} we do this for the Reissner-Nordstr\"{o}m-Vaidya spacetime using the global volume argument of Proposition \ref{prop:cornervolumecondition} (see also \cite[Section 4]{CameronSbierski}). In Section \ref{DLO} we consider spacetimes arising from small and generic spherically symmetric perturbations to Reissner-Nordstr\"{o}m initial data in the Einstein-Maxwell-scalar field system. In this case the corner extension can be ruled out even locally since $\partial_ur$ extends continuously to the Cauchy horizon with $\partial_ur<0$.
\subsection{Reissner-Nordstr\"{o}m-Vaidya spacetimes}\label{RNV}
The Reissner-Nordstr\"{o}m-Vaidya (RNV) spacetime $(M_{RNV},g_{RNV})$ is given by $M_{RNV}=(v_0,\infty)\times(0,\infty)\times\Sp^2$ with canonical $(v,r)$-coordinates on the first two factors and
\begin{equation}
\begin{split}
        g_{RNV}&=-\left(1-\frac{2\varpi(v)}{r}+\frac{e^2}{r^2}\right)dv^2+dv\otimes dr+dr\otimes dv+r^2\gamma_{AB}d\theta^A\otimes d\theta^B
\end{split}
\end{equation}
where $\varpi(v)=\varpi(\infty)-\beta v^{-p}$, $\beta>0$, $p>1$, $e>0$ and $v_0>0$ is sufficiently large that $\varpi(v_0)>e$. This models the continuous influx of null dust into a subextremal Reissner-Nordstr\"{o}m black hole decaying with a tail $\rho\sim v^{-(p+1)}$.
As in \cite[Section 4.2]{Sbie22a}, we define a second null coordinate, $u\in(u_{\mathcal{H}^+},u_T)$ for some $-\infty<u_{\mathcal{H}^+}<u_T<\infty$, using the method of characteristics.\footnote{Indeed, by \cite[Proposition 4.29]{Sbie22a} we have $u_T=0$.} The metric in double null coordinates $(u,v, \theta^A)$ takes the form\footnote{Note the factor of $2$ difference between the definition of $\Omega^2$ here and that in \cite{Sbie22a}.}
\begin{equation}
    g=-\Omega^2(u,v)\left(du\otimes dv+dv\otimes du\right)+r^2(u,v)\gamma_{AB}d\theta^A\otimes d\theta^B 
\end{equation}
with $\Omega^2 = - \rd_u r$, see Step 3 of the proof of \cite[Proposition 4.25]{Sbie22a}. 
We also define the rescaled null coordinate $V(v):=-e^{\kappa_-(\infty)v}$, where $\kappa_-(\infty)<0$ is the surface gravity of the Cauchy horizon as defined in \cite{Sbie22a}. The $(u,V,\theta^A)$ coordinates are defined on $\Q_{RNV} \times \Sp^2$, with $\Q_{RNV} = (u_{\mathcal{H}^+}, u_T) \times (V_0, 0)$, where  $V_0 = V(v_0)$. Note that this corresponds to a subset of $M_{RNV}$, see \cite[Figure 10]{Sbie22a}. In these coordinates, the metric
\begin{equation}
    g=-\underbrace{\frac{\Omega^2}{\kappa_-(\infty)V}}_{=: \overline{\Omega}^2}\left(du\otimes dV+dV\otimes du\right)+r^2\gamma_{AB}d\theta^A\otimes d\theta^B
\end{equation}
extends continuously to $\overline{\Q_{RNV}} \times \Sp^2$, where $\overline{\Q_{RNV}} = (u_{\mathcal{H}^+}, u_T) \times (V_0, 0]$ by \cite[Proposition 4.25]{Sbie22a}. The weakly singular Cauchy horizon is at $\{V=0\}$. 
 
\begin{cor}[Reissner-Nordstr\"{o}m-Vaidya]\label{cor:RNV}
    For $s>1$, there is no strongly spherically symmetric $C^0 \cap W^{1,s}_{\loc}$-extension of the Reissner-Nordstr\"{o}m-Vaidya spacetime across the weakly singular Cauchy horizon.
\end{cor}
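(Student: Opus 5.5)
We argue by contradiction: suppose $\iota=\iota_{\Q}\times\id_{\Sp^2}$ is a strongly spherically symmetric $C^0\cap W^{1,s}_{\loc}$-extension of the RNV spacetime across the Cauchy horizon $\{V=0\}$. We work in the $(u,V)$ coordinates on $\Q_{RNV}$, in which the metric takes the form \eqref{EqGDN} with conformal factor $\overline{\Omega}^2$ extending continuously and positively to $\overline{\Q_{RNV}}$; after affine rescalings of $u$ and $V$ this is a local piece of the type $(\Mw,\gw)$. By Remark \ref{RemEquivDefAcross}, we may assume $\tilde p_{u_*}=\lim_{V\to0}\iota_{\Q}(u_*,V)\in\partial\iota_{\Q}(\Q_{RNV})$ exists for some $u_*\in(u_{\mathcal H^+},u_T)$. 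There are then two tasks: first, rule out the corner case (a) of Theorem \ref{thm:boundaryrigidity}, so that $\iota_\Q$ is locally $C^0$-equivalent to $\overline{\Q_{RNV}}$ at $(u_*,0)$; second, verify the blow-up assumption \eqref{eqn:blowupassumption} near $u_*$ and apply Theorem \ref{thm:Christoffelblowup}.

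For the first task we apply Proposition \ref{prop:cornervolumecondition}, which requires
\[
\int_{u_{\mathcal H^+}}^{u_T}\int_{V_*}^0\overline{\Omega}^2\,dV\,du=\infty \qquad \text{for every } V_*\in(V_0,0).
\]
Changing variables back to $v$, with $dV=\kappa_-(\infty)V\,dv$, this becomes $\int\int_{v_*}^\infty\Omega^2\,dv\,du$. Since $\Omega^2=-\partial_u r$, the $u$-integral at fixed $v$ equals $r(u_{\mathcal H^+},v)-r(u_T,v)$. This is the difference of the area radius between the event horizon, where $r\to r_+>r_-$, and the other end of the range of $u$. Using the bounds on $r$ from \cite{Sbie22a}, this difference is bounded below by a positive constant uniformly in large $v$, so the $v$-integral over $(v_*,\infty)$ diverges. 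The only care needed is to invoke the right facts from \cite{Sbie22a} about the limiting values of $r$. Proposition \ref{prop:cornervolumecondition} then gives local $C^0$-equivalence at $(u_*,0)$.

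For the second task we need
\[
\int^{V}\inf_{|u-u_*|\le\epsilon}|\partial_V r|^s\,dV'\to\infty \qquad \text{as } V\to 0 .
\]
Since $\partial_V r=\partial_v r/(\kappa_-(\infty)V)$, this is equivalent to $\int^\infty\inf_u|\partial_v r|^s\,|V|^{1-s}\,dv=\infty$ with $|V|=e^{\kappa_-(\infty)v}$. Because $\kappa_-<0$, the factor $|V|^{1-s}=e^{(s-1)|\kappa_-|v}$ grows exponentially. It therefore suffices to have a lower bound of the form $|\partial_v r(u,v)|\gtrsim v^{-p-1}$, or indeed any polynomial lower bound, uniformly for $u$ near $u_*$. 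We expect such a bound to follow from the analysis in \cite{Sbie22a} of $\partial_v r$ near the Cauchy horizon, which is driven by the influx $\varpi'(v)=p\beta v^{-p-1}>0$ through the Raychaudhuri/wave equation for $r$. Then exponential growth beats polynomial decay and the integral diverges for every $s>1$.

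We expect this lower bound on $|\partial_v r|$, uniform in $u$ on a compact interval, to be the main obstacle. If it is not directly available, we would try integrating the equation $\partial_u\partial_v r=\dots$ from a fixed $u_0$, using the sign of the relevant term (the mass-inflation mechanism) to propagate the lower bound inherited from the null dust influx. With the bound in hand, Theorem \ref{thm:Christoffelblowup} yields the desired contradiction.
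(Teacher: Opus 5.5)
Your proposal is correct and follows the paper's proof essentially step for step. You argue by contradiction via Remark \ref{RemEquivDefAcross}, then obtain local $C^0$-equivalence from Proposition \ref{prop:cornervolumecondition} using $\Omega^2=-\partial_u r$ and the limits of $r$ from \cite{Sbie22a}. The paper integrates only over $(u_{\mathcal H^+},u_1)$, with $r(u_1,v)\to r_-(\infty)$, which avoids evaluating $r$ at the endpoint $u_T$. Finally you get \eqref{eqn:blowupassumption} from a bound $\partial_v r\lesssim -v^{-(p+1)}$, uniform for $u$ near $u_*$, which the paper likewise takes from the proof of \cite[Proposition 4.25]{Sbie22a}.

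One small correction to your fallback remark: there is no mass inflation in RNV, since the Hawking mass stays bounded. The lower bound on $|\partial_v r|$ instead comes from the null dust influx through $2\partial_v r=1-2\varpi(v)/r+e^2/r^2$.
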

As in \cite{Sbie22a}, we remark that our inextendibility result does not require mass inflation \cite{Ori:1991zz,Poisson:1989zz,Poisson:1990eh} as an assumption. 
\begin{proof} 
Suppose there exists a strongly spherically symmetric $C^0 \cap W^{1,s}_{\loc}$-extension $\iota:=\iota_{\Q_{RNV}}\times\id_{\Sp^2} : \Q_{RNV} \times \Sp^2 \to \tilde{\Q} \times \Sp^2$  across $\{V=0\}$.\footnote{We will lead this statement to a contradiction. Note that as stated, Corollary \ref{cor:RNV} applies to the global spacetime $M_{RNV}$. However, it is sufficient to prove this local result since, as discussed below \cite[Figure 10]{Sbie22a}, one can always increase $v_1 >v_0$ to construct null coordinates that cover an arbitrarily large part of the weakly singular Cauchy horizon.} 

By Remark \ref{RemEquivDefAcross} there exists $u_* \in (u_{\mathcal{H}^+}, u_T)$ such that $\lim_{V \to 0} \iota_{\Q_{RNV}}(u_*,V) \in \tilde{\Q}$ exists. We now verify the remaining assumptions of Proposition \ref{prop:cornervolumecondition}.

By Claim 6 of \cite[Section 4.2]{Sbie22a} we can choose $u_1\in(u_{\mathcal{H}^+},u_T)$ such that $\lim_{v\to\infty}r(u_1,v)=r_-(\infty)>0$ and by Claim 4 of \cite[Section 4.2]{Sbie22a} we have $\lim_{v\to\infty}r(u_{\mathcal{H}^+},v)=r_+(\infty)>r_-(\infty)$.
Hence for any $v_*>v_0$ we have
\begin{equation}
    \begin{split}
        \vol_{g_{\Q_{RNV}}}\left((u_{\mathcal{H}^+},u_1)\times(V(v_*),0)\right):=
        &\int_{v_*}^\infty\int_{u_{\mathcal{H}^+}}^{u_1}\Omega(u',v')^2du'dv'\\
        =&-\int_{v_*}^\infty\int_{u_{\mathcal{H}^+}}^{u_1}\partial_ur(u',v') du'dv'\\
        =&\int_{v_*}^\infty \left(r(u_{\mathcal{H}^+},v') -r(u_1,v'
        )\right)dv'\\
        =&\infty
    \end{split}
\end{equation}
because the integrand converges to $r_+(\infty)-r_-(\infty)>0$ as $v'\to\infty$. It follows from Proposition \ref{prop:cornervolumecondition} that $\iota_{\Q_{RNV}}$ is locally $C^0$-equivalent to $\overline{\Q_{RNV}}$ at $(u_*,0) \in \overline{\Q_{RNV}}$.


Next we show that the blow-up condition \eqref{eqn:blowupassumption} holds. Let $V_*<V<0$ and $\epsilon>0$ be sufficiently small that $[u_*-\epsilon,u_*+\epsilon]\subset (u_{\mathcal{H}^+},u_T)$. By the proof of \cite[Proposition 4.25]{Sbie22a}, there exists a constant $C>0$ such that\footnote{This estimate is stated in \cite{Sbie22a} for a fixed value of $u\in(u_{\mathcal{H}^+},u_T)$ however it is straightforward to check that it holds uniformly for all $u\in[u_*-\epsilon,u_*+\epsilon]$.}
\begin{equation} \begin{split}
          \partial_Vr(u,V)&\leq -Ce^{-\kappa_-(\infty)v(V)}v(V)^{-(p+1)}\\
     &=\frac{C}{V}\left|\frac{\kappa_-(\infty)}{\log|V|}\right|^{p+1}\\
     \implies      \int_{_{V_*}}^V\inf_{u\in[u_*-\epsilon,u_*+\epsilon]}\left|\partial_Vr(u,V')\right|^{s}dV'&\geq C^{s}|\kappa_-(\infty)|^{s(p+1)}\int_{V_*}^V|V'|^{-s}\left|\log|V'|\right|^{-s(p+1)}dV'\\
     &\to\infty\text{ as }V\to0
\end{split}
\end{equation}  
for $s>1$.
 Hence, Theorem \ref{thm:Christoffelblowup} leads the statement we started with to a contradiction.
\end{proof}
\subsection{Dafermos-Luk-Oh spacetimes}\label{DLO}
We now consider spacetimes $(M_{DLO} = \mathfrak{Q}_{DLO} \times \Sp^2,g_{DLO})$ arising from small and generic spherically symmetric perturbations of asymptotically flat two-ended subextremal Reissner-Nordstr\"{o}m initial data for the Einstein-Maxwell-scalar field system. Following \cite{Sbie22a}, we refer to this class of spacetimes
\begin{wrapfigure}[11]{r}{0.46\textwidth}
 \centering
  \def\svgwidth{7.5cm}
   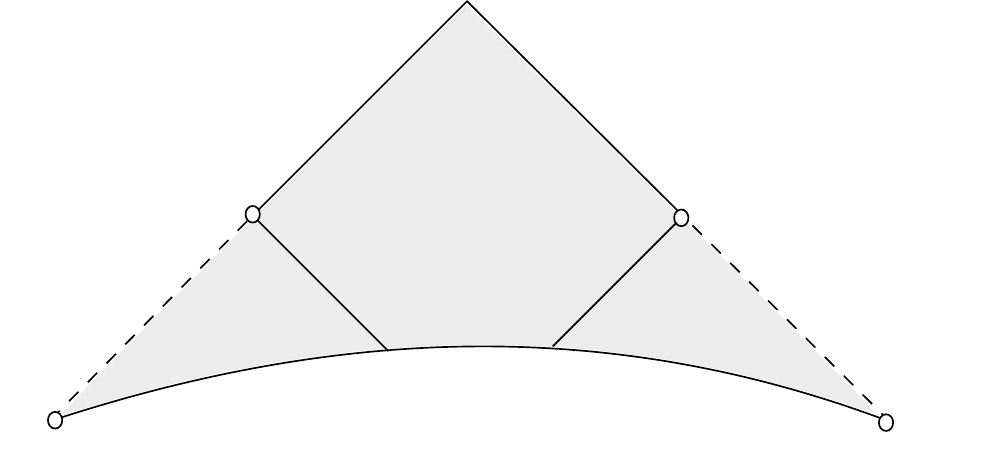 
   \caption{Penrose-style diagram of the DLO spacetimes.}
   \label{FigDLO}
\end{wrapfigure}
 as \textit{Dafermos-Luk-Oh} spacetimes. Their Penrose diagram is given in Figure \ref{FigDLO}. The black hole interior is bounded to the future by a bifurcate weakly singular Cauchy horizon $CH^+$: the weak null singularity. The metric extends continuously to $\overline{\mathfrak{Q}_{DLO}} \times \Sp^2 = \mathfrak{Q}_{DLO} \times \Sp^2 \cup CH^+$.

\begin{cor}[Dafermos-Luk-Oh spacetimes]\label{cor:DLO}
Suppose that $(M_{DLO} = \mathfrak{Q}_{DLO} \times \Sp^2,g_{DLO})$ arises as a solution to the Einstein-Maxwell-scalar field system in spherical symmetry from generic and sufficiently small perturbations of asymptotically flat two-ended subextremal Reissner-Nordstr\"{o}m initial data as  considered in \cite{LukOh19I}, \cite{LukOh19II}, \cite{Gautam}.\footnote{It should be mentioned that in \cite{LukOh19I} Luk and Oh do not only consider globally small perturbations of subextremal RN initial data that lead to what we call here a DLO spacetime, but they also treat ‘admissible’ large deviations from exact RN which possibly lead to the closing off of the Cauchy horizon, transitioning into a spacelike singularity in the interior. We do not discuss
the latter case here.}  Then there is no strongly spherically symmetric $C^0\cap W^{1,s}_{\loc}$-extension of $(M_{DLO},g_{DLO})$ across the Cauchy horizon $CH^+$. 
\end{cor}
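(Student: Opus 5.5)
The proof will mirror that of Corollary \ref{cor:RNV}. We argue by contradiction: suppose $\iota=\iota_{\Q}\times\id_{\Sp^2}$ is a strongly spherically symmetric $C^0\cap W^{1,s}_{\loc}$-extension across $CH^+$. We first localise to a piece of $CH^+$ away from the bifurcation sphere. In that region we use double null coordinates $(u,V)$ in which the metric has the form \eqref{EqGDN} on $\Q=(u_1,u_2)\times(V_0,0)$, with $\Omega$ and $r$ extending continuously and positively to $\{V=0\}$. Such coordinates are provided by \cite{Daf05a} and \cite{LukOh19I}, after rescaling $v$ by the blue-shift factor of the Cauchy horizon exactly as for RNV. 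By Remark \ref{RemEquivDefAcross} there is some $u_*$ with $\lim_{V\to0}\iota_{\Q}(u_*,V)\in\tilde{\Q}$. The goal is then to verify the two hypotheses of Theorem \ref{thm:Christoffelblowup}: local $C^0$-equivalence to $\overline{\Q}$ at $(u_*,0)$, and the blow-up condition \eqref{eqn:blowupassumption}.

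\textbf{Step 1: ruling out the corner.} By Theorem \ref{thm:boundaryrigidity}, if local $C^0$-equivalence fails, we are in case (a). Then $\iota_{\Q}$ extends continuously to $V=0$ with $\iota_{\Q}(u,0)=\tilde p_{u_*}$ for all $u$. Since $\tilde r$ is continuous on $\tilde{\Q}$ and extends $r\circ\iota_\Q^{-1}$, this forces $r(u,0)=\tilde r(\tilde p_{u_*})$ to be independent of $u$. On the DLO Cauchy horizon, however, $\partial_u r$ extends continuously to $CH^+$ with $\partial_u r<0$, since the area radius is strictly decreasing along the horizon, as shown in \cite{Daf05a,LukOh19I}. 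This contradicts the constancy of $r(\cdot,0)$, so case (b) holds and $\iota_\Q$ is locally $C^0$-equivalent to $\overline{\Q}$. This local argument avoids any global volume computation. As a fallback, Proposition \ref{prop:cornervolumecondition} could be used, with the infinite volume obtained from the $\Omega^2$ asymptotics.

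\textbf{Step 2: the blow-up condition.} This is the main obstacle. We need a uniform-in-$u$ pointwise lower bound on $|\partial_V r|$ near $V=0$ that fails to be $L^s$-integrable in $V$ for every $s>1$. The plan is to use the pointwise lower bound on the scalar field along the event horizon from \cite{Gautam}, namely $|\partial_v\phi|\gtrsim v^{-p}$ with the Price-law rate. Following \cite{Daf05a}, we propagate it into the interior to obtain a lower bound
\[
|\partial_v r|(u,v)\gtrsim v^{-q}
\]
in the original Eddington–Finkelstein-type coordinate $v$, for some $q$, uniformly on $[u_*-\epsilon,u_*+\epsilon]$. One route is the Raychaudhuri equation $\partial_v(\partial_v r/\Omega^2)=-r(\partial_v\phi)^2/\Omega^2$. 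The other is the mass-inflation bound on the Hawking mass, $\varpi\to\infty$, together with $\partial_v r=-(\Omega^2/4\partial_u r)(1-2\varpi/r+e^2/r^2)$, where $\partial_u r$ is bounded and $\Omega^2$ controlled.

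Converting to $V=-e^{\kappa v}$ gives $|\partial_V r|\gtrsim |V|^{-1}|\log|V||^{-q}$. This is exactly the form used in the RNV case, and it is not in $L^s$ near $0$ for any $s>1$, which yields \eqref{eqn:blowupassumption}. The delicate part is obtaining the bound uniformly in $u$ and in the rescaled coordinates where $\Omega$ is continuous. I expect to use the fact that Dafermos' estimates hold uniformly on compact $u$-ranges away from the bifurcation sphere.

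Theorem \ref{thm:Christoffelblowup} then gives the contradiction. Near the bifurcation sphere the same argument applies with the roles of $u$ and $v$ exchanged on the other component of $CH^+$.
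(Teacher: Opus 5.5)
Your reduction to Theorem~\ref{thm:Christoffelblowup} and your Step 1 match the paper: you rule out case (a) of Theorem~\ref{thm:boundaryrigidity} because $r(\cdot,0)$ is not constant on $CH^+$. One caveat is that strict negativity of $\partial_u r$ on $CH^+$ is a genericity statement. It fails for exact Reissner--Nordstr\"om and comes from \cite{Gautam} combined with \cite{Daf05a} or \cite{LukOhShla23}.

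The genuine gap is Step 2, which you correctly identify as the crux but do not carry out. The mass-inflation route cannot work as stated. In coordinates where $\Omega$ and $\partial_u r$ extend continuously with $\partial_u r<0$, the identity you quote makes $|\partial_V r|$ comparable to $\varpi$. Also, $\partial_V r(u,\cdot)$ is always integrable up to $CH^+$, since $r$ extends continuously and $\partial_V r$ has a sign there. So a qualitative statement $\varpi\to\infty$ without a rate is compatible with $\partial_V r\in L^s$ for every $s$.

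The Raychaudhuri route has its own problems. It would need pointwise lower bounds on $\partial_v\phi$ throughout the blue-shift region up to $CH^+$, uniformly in $u$. These do not follow from Gautam's event-horizon bound without a full quantitative propagation argument, which you do not supply. In addition, converting via $V=-e^{\kappa v}$ presupposes that $\Omega$ extends continuously and positively in that coordinate. You assert this by analogy with RNV but do not justify it; the paper works directly in the coordinates of \cite[Theorem 5.5]{LukOh19I}.

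The missing idea is that neither a pointwise bound nor uniformity from the fine interior analysis is needed. The wave equation for $r$ gives $\partial_u(r\partial_V r)=-\frac{\Omega^2}{4}\big(1-\frac{e^2}{r^2}\big)$, up to the normalisation of $\Omega$; the scalar field drops out. This is bounded because $\Omega$ and $r$ extend continuously and positively. Hence $|r\partial_V r(u,V)-r\partial_V r(u_*,V)|\le C\epsilon$ for $|u-u_*|\le\epsilon$, so $\inf_{u}|\partial_V r(u,V)|\ge c\,|\partial_V r(u_*,V)|-C\epsilon$. Then \eqref{eqn:blowupassumption} follows from the merely integrated fact that $\partial_V r(u_*,\cdot)\notin L^s$ near $CH^+$ along the single ray $u=u_*$. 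The paper takes this from \cite[Proposition C.1]{LukOh19I}.

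Separately, your last sentence does not handle a curve $\tau$ ending at the bifurcation sphere; exchanging $u$ and $v$ only addresses left versus right components. An extra argument is needed to produce a radial null geodesic with a limit both in the extension and on $CH^+$ away from the bifurcation sphere. The paper takes this from \cite[Step 2.2 of proof of Theorem 4.1]{Sbie22a}.
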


\begin{proof}[Sketch of proof]
    Assume that $\iota := \iota_{\mathfrak{Q}_{DLO}} \times \id_{\Sp^2} : \mathfrak{Q}_{DLO} \times \Sp^2 \to \tilde{Q} \times \Sp^2$ is a strongly spherically symmetric $C^0\cap W^{1,s}_{\loc}$-extension of $(M_{DLO},g_{DLO})$ across the Cauchy horizon $CH^+$. This means there exists a $C^1$ future-directed causal curve $\tau : [-1,0) \to \mathfrak{Q}_{DLO}$ which has a limit point on the Cauchy horizon (in the quotient spacetime $\overline{\mathfrak{Q}_{DLO}}$) and such that $\iota_{\mathfrak{Q}_{DLO}} \circ \tau$ also has a limit point in $\tilde{\Q}$. If $\tau$ has a limit point on the bifurcation sphere, then the same argument as in \cite[Step 2.2 of proof of Theorem 4.1]{Sbie22a} shows that there also exists a radial null geodesic in $M_{DLO}$ which has a limit on the Cauchy horizon away from the bifurcation sphere as well as in the assumed $C^0 \cap W^{1,s}_{\loc}$-extension. Since the geometry of the left and the right Cauchy horizon are comparable, we can thus assume without loss of generality that $\tau$ approaches the right Cauchy horizon away from the bifurcation sphere.

We now consider a portion of the global spacetime near the limit point of $\tau$ on the right Cauchy horizon given by $U_{DLO}=\Q_{DLO}\times \Sp^2:=(u_0,u_1)\times(V_1,1)\times \Sp^2$, where $(u,V)$ are coordinates on the first two factors and $-\infty < u_0 < u_1 < \infty$, see Figure \ref{FigDLO}. The metric in these coordinates is
\begin{equation}
    g_{DLO}=\underbrace{-\Omega^2\left(du\otimes dV+dV\otimes du\right)}_{g_{DLO,\Q}}+r^2\gamma_{AB}d\theta^A\otimes d\theta^B
\end{equation}
where $\Omega$ and $r$ are smooth strictly positive functions on $\Q_{DLO}$ which extend continuously to strictly positive functions on $\overline{\Q}_{DLO}:=(u_0,u_1)\times(V_1,1]$, see \cite[Theorem 5.5]{LukOh19I}. The hypersurface $\{V=1\}$ corresponds to the portion of the (right) Cauchy horizon intersected with $U_{DLO}$. Clearly, $\iota$ restricts to a strongly spherically symmetric $C^0\cap W^{1,s}_{\loc}$-extension $\iota_{\Q_{DLO}} \times \id_{\Sp^2}$ of $(U_{DLO},g_{DLO})$ across  $\{V=1\}$. By Remark \ref{RemEquivDefAcross} there exists $u_* \in (u_0,u_1)$ such that $\lim_{V \to 1}\iota_{\Q_{DLO}}(u_*,V)$ exists in $\overline{\Q_{DLO}}$.

By \cite{Gautam} in combination with either \cite{Daf05a} or \cite{LukOhShla23} it follows that $\rd_u r$ extends continuously to $\{V=1\}$ as a strictly  \emph{negative} function -- and thus $r(u,1)$ is in particular not constant. Assume now that $\iota_{\Q_{DLO}}$ were not locally $C^0$-equivalent to $\overline{\Q_{DLO}}$ at $(u_*,1)$. Then by Theorem \ref{thm:boundaryrigidity}, $\iota_{\Q_{DLO}}(u,1) = \tilde{p}_{u_*}$ must be independent of $u$ -- and we also have, by assumption, that $r \circ \iota_{\Q_{DLO}^{-1}}$ extends continuously to $\tilde{p}_{u_*}$. This, however, is in contradiction to $r(u,1)$
 not being constant. Hence, $\iota_{Q_{DLO}}$ must be locally $C^0$-equivalent to $\overline{\Q_{DLO}}$ at $(u_*,1)$.

 Let $\epsilon>0$ be sufficiently small that $[u_*-\epsilon,u_*+\epsilon]\subset(u_0,u_1)$. From the proof of \cite[Proposition C.1]{LukOh19I}, there exists a constant $C>0$ such that for $(u,V)\in [u_*-\epsilon,u_*+\epsilon]\times [V_1,1)$ we have
    \begin{equation}
        C^{-1}\leq r\leq C,\quad \left|\partial_u(r\partial_Vr)\right|\leq C.
    \end{equation}
It follows that 
\begin{equation}
    \begin{split}
        \left|r(u,V)\partial_Vr(u,V)-r(u_*,V)\partial_Vr(u_*,V)\right|&\leq C|u-u_*|\\
        &\leq C\epsilon\\
        \implies\inf_{u\in[u_*-\epsilon,u_*+\epsilon]}\left|\partial_Vr(u,V)\right|&\geq \frac{1}{C}\left|\partial_Vr(u_*,V)\right|-\epsilon.
    \end{split}
\end{equation}
By \cite[Proposition C.1]{LukOh19I} we have $\lim_{V\to1}||\partial_Vr(u_*,V)\cdot\mathbbm{1}_{(V_1,V)}||_{L^s(V_1,1)}=\infty$, so it follows that 
\begin{equation}
    \int_{V_1}^V \inf_{u\in[u_*-\epsilon,u_*+\epsilon]}  \left|\partial_Vr(u,V')\right|^{s} dV'\to\infty\text{ as }V\to1\;.
\end{equation}
    
    Hence by Theorem \ref{thm:Christoffelblowup}, we obtain the contradiction that $\iota_{\Q_{DLO}} \times \id_{\Sp^2}$ is not a strongly spherically symmetric $C^0\cap W^{1,s}_{\loc}$-extension.
    

\end{proof}

\bibliographystyle{amsplain}
\bibliography{bibliography.bib}
\end{document}